\documentclass[submission]{eptcs}

\usepackage{./sources/wsgts}


\title{Resilience of Well-structured Graph Transformation Systems\thanks{
Supported by the German Research Foundation (DFG) through the
Research Training Group (DFG GRK 1765) SCARE}}
\author{Okan \"Ozkan \qquad\qquad\qquad Nick W\"urdemann
\institute{Department of Computing Science\\
	University of Oldenburg\\
	Oldenburg, Germany}
\email{\{o.oezkan,wuerdemann\}@informatik.uni-oldenburg.de}
}

\begin{document}
\maketitle

\begin{abstract}
Resilience is a concept of rising interest in computer science and software engineering.
For systems in which correctness w.r.t.\ a safety condition is unachievable, fast recovery is demanded.
We investigate resilience problems of graph transformation systems.
Our main contribution is the decidability of two resilience problems for well-structured graph transformation systems (with strong compatibility).
We prove our results in the abstract framework of well-structured transition systems and apply them to graph transformation systems,
incorporating also the concept of adverse conditions.
  
\end{abstract}

\section{Introduction}

\emph{Resilience} is a broadly used concept in computer science and software engineering (e.g.,\ \cite{Trivedi09}), 
and a basic concept for, e.g.,\ industrial control systems \cite{Rieger13} and mobile cyber-physical systems \cite{Karsai16}.
For systems in which \emph{correctness} w.r.t.\ a safety condition $\SAFE$ is unachievable, 
\emph{fast recovery} is demanded. We interprete fast recovery as reachability of the
safety condition in a bounded amount of time steps. The intuitive approach is 
to start from any \emph{error state}, i.e., a state in which $\neg\SAFE(\equiv\ERR)$ holds, and try
to reach a state in which $\SAFE$ holds again as fast as possible. 

Another approach to formalizing resilience is to ask whether 
the system can \emph{withstand an adverse effect} rather than to ask
whether fast recovery is possible from any error state. To formally capture adverse 
effects we consider an environment interacting with the system.
In this setting, we investigate on the question whether a state satisfying $\SAFE$ can be reached in bounded time, 
starting from any state satisfying $\ENV$, i.e.,
any state directly resulting from an environment interference.

For modeling systems we use \emph{graph transformation systems (GTSs)}, 
as considered, e.g., in \cite{Ehrig97}, which are a visual yet precise formalism. 
In this perception, system states are captured by graphs and state changes by graph
transformations. Usually, the state set (the set of graphs reachable from a start graph) is infinite.
To handle infinite state sets, 
we incorporate the concept of well-structuredness \cite{AbdullaCJT96,FinkelS01, Koenig17}. 
A \emph{well-structured transition system (WSTS)} is informally 
a transition system equipped with a well-quasi-order (wqo) satisfying that
larger states simulate smaller states. 
This allows us to abstract from both of the approaches towards resilience described above. 
In the setting of WSTSs, we define resilience problems for a given downward-closed set $\Ji$ 
(a $ \envi $ condition, e.g.,\ $\ERR$ or $\ENV$) 
and an upward-closed set~$\Ii$ (e.g.,\ a safety property $\SAFE$).
Given an initial state~$ s $ and a natural number $k$, 
the \emph{explicit resilience problem} asks
whether we can, starting from $s$, reach $\Ii$ in at most $k$ steps 
whenever we reach $\Ji$. 
The \emph{bounded resilience problem}
asks whether there exists a~$ k $ such that 
\emph{k-step resilience} is satisfied.

We show that both resilience problems (given a basis of the upward-closure of the reachable states) are 
decidable for \emph{strongly} well-structured transition systems (SWSTSs).
We propose an algorithm which computes the minimal $k$ s.t.\ we can 
recover from any $ \envi $ state in at most $k$ steps, 
or returns $\false$ if there exists no such $k$. 
It is based on the ideal reachability algorithm proposed by Abdulla et al.\ \cite{AbdullaCJT96},
and solves both resilience problems at the same time. 

When applying these results to GTSs, 
we assume that the corresponding graph class is of bounded path length 
in order to obtain a SWSTS. 
This sufficient condition for a GTS to be strongly well-structured is shown by K{\"o}nig \& St{\"u}ckrath in \cite{Koenig17}. 
The wqo on graphs used in this case is the subgraph order, so $\Ii =\Ii_\SAFE$ corresponds to a constraint stating existence of subgraphs.
We incorporate adverse conditions by distinguishing system and environment rules,
and considering $\Ji= J_\ENV$, the set of graphs directly resulting from the application of an \mbox{environment}~rule. 

The rest of this paper is organized as follows: 
We recall preliminary concepts in Sec.~\ref{sec:prelim}. 
In Sec.~\ref{sec:adverseCondAndResProblems}, we present the concept of resilience in the context of adverse conditions
and identify abstract resilience problems. 
In Sec.~\ref{sec:decidabilityResults}, we prove decidability of resilience for strongly well-structured transition systems. 
We apply these results to graph transformation systems incorporating adverse conditions in Sec.~\ref{sec:appl}. 
In Sec.~\ref{sec:related}, we present related work. 
We close with a conclusion and an outlook in Sec.~\ref{sec:conc}. 

\section{Preliminaries}
\label{sec:prelim}
We recall the concepts used in this paper, namely
\emph{graph transformation systems} \cite{Ehrig97,Ehrig06} and 
(in particular \emph{well-structured}) \emph{transition systems} \cite{FinkelS01}.

\subsection{Graph Transformation Systems}
In the following, we recall the definitions of graphs, graph conditions, 
rules, and graph transformation systems \cite{Ehrig97,Ehrig06}. 
A directed, labeled graph consists of a set of nodes 
and a set of edges where each edge is equipped with 
a source and a target node and where each node and 
edge is equipped with a label. Note that this kind of graphs
are a special case of the hypergraphs considered in \cite{Koenig17}.

\begin{defn}[graphs \& graph morphisms]  
	A \emph{(directed, labeled) graph} (over a finite label alphabet~$\Lambda$)
	is a tuple $\G=\tuple{\V_\G,\E_\G,\sou_\G,\tar_\G,\labV_\G,\labE_\G}$,
	with finite sets $\V_\G$ and $\E_\G$ of \emph{nodes} (or \emph{vertices})
	and \emph{edges}, 
	functions $\sou_\G,\tar_\G:\E_\G\to \V_\G$ 
	assigning \emph{source} and \emph{target} to each edge, and 
	\emph{labeling functions} $\labV_\G:\V_\G\to\Lambda$, $\labE_\G:\E_\G\to\Lambda$ . 
	A \emph{(simple, undirected) path} $p$ in $\G$ of length $\ell$ is a sequence 
	$\tuple{v_1 , e_1 , v_2 \ldots, v_\ell , e_\ell,  v_{\ell+1}}$ of nodes and edges s.t.\ 
	$\sou_\G(e_i )=v_i$ and $\tar_\G(e_i )=v_{i+1}$, or $\tar_\G(e_i )=v_i$ and $\sou_\G(e_i )=v_{i+1}$ for every $1 \le i \le \ell$, 
	and all contained nodes and edges occur at most once.	
	Let $\ell(\G)$ denote the length of a longest path in $G$. 
	Given graphs $\G$ and $\GH$, a \emph{(partial graph) morphism} 
	$\mor: \G \pmor \GH$ consists of partial functions 
	$\mor_\V:\V_\G\pmor\V_\GH$ and $\mor_\E:\E_\G\pmor\E_\GH$ 
	which preserve sources, targets, and labels, 
	i.e., $\mor_\V\circ\sou_\G (e)=\sou_\GH\circ \mor_\E (e)$, 
	$\mor_\V\circ\tar_\G(e)=\tar_\GH\circ \mor_\E(e)$, 
	$\labV_{\G} (v)=\labV_{\GH}\circ \mor_\V (v)$, and
	$\labE_{\G} (e)=\labE_{\GH}\circ \mor_\E (e)$ on all egdes $e$ and nodes $v$, for which
	$\mor_E(e), \labE(e), \labV(v)$ is defined. Furthermore, if a morphism is defined on an edge, it must be defined on all incident nodes. 
	The morphism $g$ is \emph{total} (\emph{injective}) if both $\mor_{\V}$ and $\mor_{\E}$ 
	are total (injective). 
	If $\mor$ is total and injective, we also write $\mor: \G \injto \GH$. 
	The composition of morphisms is defined componentwise.
\end{defn}

We consider graph constraints \cite{Rensink04,HabelP09} whose validities are inherited to 
bigger/smaller~graphs.
\begin{defn}[positive \& negative basic graph constraints] 
	The class of \emph{positive (basic graph) constraints} is defined inductively: 
	(i) $\exists G$ is a positive constraint where 
	$G$ is a graph, 
	(ii) for positive constraints $c,c'$, also 
	$c \lor c'$, $ c\land c'$ are positive constraints. 
	Analogously, the \emph{negative (basic graph) constraints} are defined by: 
	(i) $\neg \exists G$ is a negative constraint for any graph $G$, 
	(ii) for negative constraints $c,c'$, also
	$c \lor c'$, $ c\land c'$ are negative constraints. 
A graph $G$ \emph{satisfies} $\exists G'$ if there exists an total injective morphism $G' \injto G$. The semantics of the logical operators are as usual.
We write $G \models c$ if $G$ satisfies the positive/negative constraint $c$.
\end{defn}
\begin{rem}
	If $c$ is a positive constraint, $\neg c$ is equivalent to a negative constraint, and vice versa.
\end{rem}

\begin{fact}[upward \& downward inheritance] 
	Let $G \injto H$ be a total injective morphism, $c$ be a positive constraint, 
	and $c'$ a negative constraint. 
	If $G \models c$, then also $H \models c$. 
	If $H \models c'$, then also $G \models c'$.
\end{fact}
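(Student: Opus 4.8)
The plan is to prove the Fact by induction on the structure of the constraint, handling the positive and negative cases in parallel. The base case is the crucial one, and the inductive step for the connectives $\land$ and $\lor$ will follow routinely from the semantics of the logical operators.

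First I would handle the base case for a positive constraint $c = \exists G'$. Suppose $G \models \exists G'$; then by definition there is a total injective morphism $G' \injto G$. Composing this with the given total injective morphism $G \injto H$ yields a total injective morphism $G' \injto H$ (composition of morphisms is componentwise, and the composite of two total injective functions is again total and injective). Hence $H \models \exists G'$, which is exactly upward inheritance in the base case. For the negative base case $c' = \neg\exists G'$, I would argue contrapositively: if $H \models \neg\exists G'$ but $G \not\models \neg\exists G'$, then $G \models \exists G'$, so by the positive case just established $H \models \exists G'$, contradicting $H \models \neg\exists G'$. Thus $G \models \neg\exists G'$, giving downward inheritance in the base case.

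For the inductive step, suppose the claim holds for positive constraints $c_1, c_2$ and negative constraints $c_1', c_2'$. For a positive conjunction $c_1 \land c_2$: if $G \models c_1 \land c_2$, then $G \models c_1$ and $G \models c_2$, so by the induction hypothesis $H \models c_1$ and $H \models c_2$, whence $H \models c_1 \land c_2$. The positive disjunction $c_1 \lor c_2$ is analogous, using that $G \models c_1 \lor c_2$ forces $G$ to satisfy one of the disjuncts, which is then inherited upward to $H$. The negative cases mirror these: for $c_1' \land c_2'$, if $H$ satisfies it then $H$ satisfies each conjunct, and downward inheritance gives the same for $G$; similarly for $c_1' \lor c_2'$.

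I do not expect any genuine obstacle here, since the result is essentially a monotonicity property baked into the definition of satisfaction via total injective morphisms. The only point requiring mild care is confirming that composition of total injective morphisms stays total and injective — which is immediate from the componentwise definition of composition and the fact that composites of total injective set-functions are total and injective on both the vertex and edge components. Everything else is a direct unwinding of the semantics of $\land$ and $\lor$.
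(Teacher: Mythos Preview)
Your proof is correct. The paper states this result as a \emph{Fact} without providing any proof, treating it as immediate from the definitions; your structural induction with composition of total injective morphisms in the base case is exactly the standard argument one would give to justify it.
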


We use the \emph{single pushout (SPO)} approach \cite{Ehrig97,Koenig17} 
with injective matches for modeling graph transformations. 
The reason for choosing SPO and not, e.g.,\ the \emph{double pushout approach (DPO)} \cite{Ehrig06} is that the dangling 
condition disturbs the compatibility condition of WSTS in Def.~\ref{wellstr}. 

\begin{defn}[rules \& transformations]\label{def:rulesAndTransformations}
	A \emph{(graph transformation) rule} $r=\tuple{L \pmor R}$ 
	(over a finite label alphabet $\Lambda$) is a partial morphism from $L$ to $R$ (both graphs over $\Lambda$).
	A \emph{(direct) transformation} $G \trafo H$ from a graph $G$ to a 
	graph $H$ applying rule $r$ at a total injective \emph{match morphism} $g: L \injto G$ is given by
	a \emph{pushout} as shown in Fig. \ref{fig:GraphTransformationVisualization} (for existence and construction 
	of pushouts, see, e.g., \cite{Ehrig97}). We write $G \trafo_r H$ to indicate the applied rule, and
	$G \trafo_\R H$ if $G \trafo_r$ for a rule $r$ contained in the rule set $\R$.
\end{defn}
Note that we do not have any application conditions. 
The pushout of a rule application
is visualized in Fig.~\ref{fig:GraphTransformationVisualization}.
An example for a rule is presented in Fig.~\ref{fig:GTRule},
and an application of that rule in Fig.~\ref{fig:ruleApplication}.
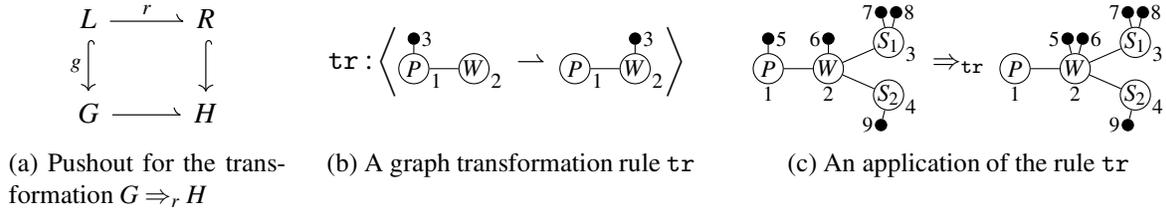
\begin{figure}[!htb]
	\centering
	\begin{subfigure}[b]{0.23\textwidth}
		\centering
		\begin{tikzcd}[>=stealth']
		L \arrow[r, "r",harpoon] \arrow[d, "g"', hook] & R \arrow[d, hook] \\
		G \arrow[r, harpoon]                       & H                     
		\end{tikzcd}
		\caption{Pushout for the transformation $G \dder_r H$}
		\label{fig:GraphTransformationVisualization}
	\end{subfigure}
	\hspace*{0.015\textwidth}
	\begin{subfigure}[b]{0.32\textwidth}
		\centering
		$ \mathtt{tr:} 
		\left\langle
		\begin{tikzpicture}[baseline=(product.center)]
		\renewcommand{\xdis}{8mm}
			\node[circle, draw=black, inner sep=0pt, minimum size=4mm] (product) [label={[inner sep=0.5pt]-10:{\scriptsize $ 1 $}}] {{\footnotesize $ P $}};
			\node[circle, draw=black, inner sep=0pt, minimum size=4mm, right=of product] (warehouse) [label={[inner sep=0.5pt]-10:{\scriptsize $ 2 $}}] {{\footnotesize $ W $}};
			\node[circle, inner sep=0pt, draw=black, fill=black, minimum size=1.5mm] at (product) [yshift=4mm] (tokenP) [label={[inner sep=0.5pt]right:{\scriptsize $ 3 $}}] {};
			\draw[-]
			(product) edge (warehouse)
			(product) edge (tokenP)
			;
		\end{tikzpicture}
		~\rightharpoonup~ 
		\begin{tikzpicture}[baseline=(product.center)]
		\renewcommand{\xdis}{8mm}
			\node[circle, draw=black, inner sep=0pt, minimum size=4mm] (product) [label={[inner sep=0.5pt]-10:{\scriptsize $ 1 $}}]{{\footnotesize $ P $}};
			\node[circle, draw=black, inner sep=0pt, minimum size=4mm, right=of product] (warehouse) [label={[inner sep=0.5pt]-10:{\scriptsize $ 2 $}}] {{\footnotesize $ W $}};
			\node[circle, inner sep=0pt, draw=black, fill=black, minimum size=1.5mm] at (warehouse) [yshift=4mm] (tokenW) [label={[inner sep=0.5pt]right:{\scriptsize $ 3 $}}] {};
			\draw[-]
			(product) edge (warehouse)
			(warehouse) edge (tokenW)
			;
		\end{tikzpicture} 
		\right\rangle$
		\vspace*{5mm}
		\caption{A graph transformation rule $ \mathtt{tr} $\\~}
		\label{fig:GTRule}
	\end{subfigure}
	\hspace*{0.015\textwidth}
	\begin{subfigure}[b]{0.37\textwidth}
		\centering
		\begin{tikzpicture}[baseline=(product.center)]
		\renewcommand{\xdis}{8mm}
		\renewcommand{\ydis}{3.5mm}
		\node[circle, draw=black, inner sep=0pt, minimum size=4mm] (product)  [label={[inner sep=0.5pt]-90:{\scriptsize $ 1 $}}] {{\footnotesize $ P $}};
		\node[circle, draw=black, inner sep=0pt, minimum size=4mm, right=of product] (warehouse)  [label={[inner sep=0.5pt]-90:{\scriptsize $ 2 $}}] {{\footnotesize $ W $}};
		\node[circle, draw=black, inner sep=0pt, minimum size=4mm, above right=of warehouse] (store1) [label={[inner sep=0.5pt]-10:{\scriptsize $ 3 $}}] {{\footnotesize $ S_1 $}};
		\node[circle, draw=black, inner sep=0pt, minimum size=4mm, below right=of warehouse] (store2) [label={[inner sep=0.5pt]-10:{\scriptsize $ 4 $}}]{{\footnotesize $ S_2 $}};
		\node[circle, inner sep=0pt, draw=black, fill=black, minimum size=1.5mm] at (product) [yshift=4mm] (tokenP) [label={[inner sep=0.5pt]right:{\scriptsize $ 5 $}}] {};
		\node[circle, inner sep=0pt, draw=black, fill=black, minimum size=1.5mm] at (warehouse) [yshift=4mm] (tokenW) [label={[inner sep=0.5pt]left:{\scriptsize $ 6 $}}] {};
		\node[circle, inner sep=0pt, draw=black, fill=black, minimum size=1.5mm, xshift=-1mm] at (store1) [yshift=4mm] (tokenS11) [label={[inner sep=0.5pt]left:{\scriptsize $ 7 $}}] {};
		\node[circle, inner sep=0pt, draw=black, fill=black, minimum size=1.5mm, xshift=1mm] at (store1) [yshift=4mm] (tokenS12) [label={[inner sep=0.5pt]right:{\scriptsize $ 8 $}}] {};
		\node[circle, inner sep=0pt, draw=black, fill=black, minimum size=1.5mm, xshift=-1mm] at (store2) [yshift=-4mm] (tokenS21) [label={[inner sep=0.5pt]left:{\scriptsize $ 9 $}}] {};
		\draw[-]
		(product) edge (warehouse)
		(warehouse)	edge (store1)
		(warehouse)	edge (store2)
		(product) edge (tokenP)
		(warehouse) edge (tokenW)
		(store1) edge (tokenS11)
				edge (tokenS12)
		(store2) edge (tokenS21)
		;
		\end{tikzpicture}
		~$ \Rightarrow_\mathtt{tr} $~
		\begin{tikzpicture}[baseline=(product.center)]
		\renewcommand{\xdis}{8mm}
		\renewcommand{\ydis}{3.5mm}
		\node[circle, draw=black, inner sep=0pt, minimum size=4mm] (product)[label={[inner sep=0.5pt]-90:{\scriptsize $ 1 $}}] {{\footnotesize $ P $}};
		\node[circle, draw=black, inner sep=0pt, minimum size=4mm, right=of product] (warehouse)[label={[inner sep=0.5pt]-90:{\scriptsize $ 2 $}}] {{\footnotesize $ W $}};
		\node[circle, draw=black, inner sep=0pt, minimum size=4mm, above right=of warehouse] (store1) [label={[inner sep=0.5pt]-10:{\scriptsize $ 3 $}}] {{\footnotesize $ S_1 $}};
		\node[circle, draw=black, inner sep=0pt, minimum size=4mm, below right=of warehouse] (store2) [label={[inner sep=0.5pt]-10:{\scriptsize $ 4 $}}] {{\footnotesize $ S_2 $}};
		\node[circle, inner sep=0pt, draw=black, fill=black, minimum size=1.5mm, xshift=-1mm] at (warehouse) [yshift=4mm] (tokenW1) [label={[inner sep=0.5pt]left:{\scriptsize $ 5 $}}] {};
		\node[circle, inner sep=0pt, draw=black, fill=black, minimum size=1.5mm, xshift=1mm] at (warehouse) [yshift=4mm] (tokenW2) [label={[inner sep=0.5pt]right:{\scriptsize $ 6 $}}] {};
		\node[circle, inner sep=0pt, draw=black, fill=black, minimum size=1.5mm, xshift=-1mm] at (store1) [yshift=4mm] (tokenS11)[label={[inner sep=0.5pt]left:{\scriptsize $ 7 $}}] {};
		\node[circle, inner sep=0pt, draw=black, fill=black, minimum size=1.5mm, xshift=1mm] at (store1) [yshift=4mm] (tokenS12) [label={[inner sep=0.5pt]right:{\scriptsize $ 8 $}}]{};
		\node[circle, inner sep=0pt, draw=black, fill=black, minimum size=1.5mm, xshift=-1mm] at (store2) [yshift=-4mm] (tokenS21) [label={[inner sep=0.5pt]left:{\scriptsize $ 9 $}}]{};
		\draw[-]
		(product) edge (warehouse)
		(warehouse)	edge (store1)
		(warehouse)	edge (store2)
		(warehouse) edge (tokenW1)
		(warehouse) edge (tokenW2)
		(store1) edge (tokenS11)
				edge (tokenS12)
		(store2) edge (tokenS21)
		;
		\end{tikzpicture} 
		\caption{An application of the rule $ \mathtt{tr} $\\~}
		\label{fig:ruleApplication}
	\end{subfigure}
	\caption{Pushout and example of a direct graph transformation.}
	\label{fig:GraphTransformation}
\end{figure}

GTSs are simply finite sets of rules. We specify the state set later. 

\begin{defn}[graph transformation system]
A \emph{graph transformation system (GTS)} is a finite set of 
graph transformation rules.
\end{defn}

\subsection{Transition Systems}
We recall the notion of transition systems. 
In Sec.~\ref{sec:decidabilityResults}, we prove our results on the level of transition systems 
and explicate the concept for graph transformation systems in Sec.~\ref{sec:appl}.
\begin{defn}[transition system]
	A \emph{transition system (TS)} $\tuple{\St,\pfeil}$ 
	consists of a (possibly infinite) set $\St$ of \emph{states} 
	and a \emph{transition relation} $\pfeil \subseteq \St \times \St$. 
	Let $\pfeil^0=\Id_\St$ (identitiy on $\St$), $\pfeil^1=\pfeil$, and 
	$\pfeil^k = \pfeil^{k-1} \circ \pfeil$ for every $k\ge 2$. 
	Let $\pfeil^{\le k} = \bigcup_{0 \le j\le k} \pfeil^j$ for every $k\ge 0$.
	The \emph{transitive closure} is given by $\pfeil^*= \bigcup_{k \ge 0} \pfeil^k$. 
\end{defn}
The following definition shows how any GTS can be interpreted as a~TS. 
\begin{defn}[graph transition system] 
	Let $\R$ be a GTS and $\GG$ a set of graphs which is closed under rule application of $\R$. 
	The \emph{graph transition system} w.r.t.\ $\R$ and $\GG$ is the transition system $\tuple{\GG, \dder_\R}$. 
     A graph transition system $\tuple{\GG, \dder_\R}$ \emph{is of bounded path length} if $\sup_{\G \in \GG} \ell(G) < \infty$.  
\end{defn} 
\begin{exmp}[GTS of bounded path length] The rules $\tuple{\varnothing \pmor \begin{tikzpicture}[baseline=(center)]
		\renewcommand{\xdis}{7mm}
		\node[circle, draw=black, inner sep=0pt, minimum size=4mm] (product) {{\footnotesize $ A $}};
\node (center) at (product) [yshift=-1mm] {};
		\end{tikzpicture}}$ and $\tuple{\begin{tikzpicture}[baseline=(center)]
		\renewcommand{\xdis}{7mm}
		\node[circle, draw=black, inner sep=0pt, minimum size=4mm] (product) [label={[inner sep=0.4pt]-180:{\scriptsize $ 1 $}}]{{\footnotesize $ A $}};	
\node (center) at (product) [yshift=-1mm] {};
		\end{tikzpicture} \pmor \begin{tikzpicture}[baseline=(center)]
		\renewcommand{\xdis}{7mm}
		\node[circle, draw=black, inner sep=0pt, minimum size=4mm] (product) [label={[inner sep=0.4pt]-180:{\scriptsize $ 1 $}}]{{\footnotesize $ A $}};
		\node[circle, inner sep=0pt, draw=black, fill=black, minimum size=1.5mm] at (product) [xshift=5mm] (tokenP) {};
\node (center) at (product) [yshift=-1mm] {};
		\draw[-to]
		(product) edge (tokenP)
		;
		\end{tikzpicture}}$ together with the set of disjoint unions of unboundedly many (possibly non-isomorphic) star-shaped graphs  forms a graph transition system of bounded path length. 
\end{exmp}
\begin{rem}Note that we only demand bounded path length. If we additionally demand a bound on the node degree, 
the number of nodes/edges in each connected component of any graph in the graph class is bounded. 
This can be shown by an induction over the bound on the path length. 
\end{rem}
Often we are interested in the predecessors or successors of a given set of states in a transition system.
\begin{defn}[pre- \& postsets] 
	Let $\tuple{\St,\pfeil}$ be a transition system.
	For $\St' \subseteq \St$ and $k\ge 0$, 
	we define $\pre^k (\St') =\{s \in \St \fdg \exists s' \in \St': s \pfeil^k s' \}$ 
	and $\post^k (\St') = \{s \in \St \fdg \exists s' \in \St' :s' \pfeil^k s \} $. 
	Let $\pre^* (\St') = \bigcup_{k \ge 0} \pre^k(\St')$ and
	$\post^* (\St') = \bigcup_{k \ge 0} \post^k(\St')$.
	We abbreviate $ \post^1(\St') $ by $ \post(\St') $ and $\pre^1 (\St')$ by $\pre(\St)$.
 \end{defn}
GTSs, when interpreted as TSs, in general have an infinite state space.

\subsection{Well-structuredness} \label{subs:wsts}
While several problems are undecidable for transition systems in general
due to their infinite state space,
many interesting decidability results can be achieved if 
the system is \emph{well-structured} \cite{FinkelS01,AbdullaCJT96,Koenig17}.
\begin{defn}[well-quasi-order] 
	A \emph{well-quasi-order (wqo)} over a set $ X $ 
	is a quasi-order (a reflexive, transitive relation) 
	$\wqo \subseteq X \times X $ s.t.\
	every infinite sequence $\tuple{x_0, x_1, \ldots}$ in $ X $ 
	contains an increasing pair $x_i \wqo x_j$ with $i <j$. 
\end{defn}
We give two examples for wqos on graphs. In our setting, the subgraph order is of crucial importance.
\begin{exmp}[subgraph \& minor order]\label{ex:WQOsOnGraphs} ~
\begin{enumerate}[(i)]
\item	The subgraph order $ \wqo $ is given by $G\wqo H$ iff there is a total injective morphism $G \injto H$.
	Let $\GG_\ell$ be a graph class of bounded path length (with bound $ \ell $). 
	The restriction of $\wqo$ to $\GG_\ell$ is a
	wqo \cite{Koenig17,Ding92}. However, it is not a wqo on all graphs: consider, e.g.,\ the infinite sequence $\tuple{\trianglegraphUN,\squaregraph,\pentagraph, \ldots }$ of cyclic graphs of increasing length, which contains no increasing pair.
\item The minor order $ \mqo $ is given by $G \mqo H$ iff $G$ can be obtained from $H$
by a sequence of edge contractions, node and edge deletions. 
The minor order is a wqo on all graphs \cite{Koenig17,Robertson04}. \end{enumerate}
\end{exmp}
\begin{ass}
	From now on, we implicitly equip every set of graphs with the subgraph order.
	By $\wqo$ we mean either an abstract wqo or the subgraph order, depending on the context.
\end{ass}
\begin{defn}[closure \& basis]
	Let $ X $ be a set and $ \leq $ a wqo on $ X $.
	For every subset $ A $ of $ X $, we denote by 
	$\clo{A}=\{ x \in X \fdg \exists a \in A : a \wqo x \}$ the \emph{upward-closure} and 
	$\dlo{A}=\{ x \in X \fdg \exists a \in A : x \wqo a \}$ the \emph{downward-closure} of $ A $. 
	If $ A=\clo{A} $, then 
	a \emph{basis} of $ A $ is a subset $ \bI\subseteq A $ s.t.\ 
	(i) $\bI$ \emph{generates} $A$, i.e., $\clo{\bI}=A$, and 
	(ii)~any two distinct elements in $ \bI $ are \emph{incomparable}, 
	i.e., $\forall b_1 , b_2 \in \bI : b_1 \neq b_2 \Rightarrow b_1 \not\wqo b_2$.
\end{defn}
Sets $ A $ satisfying $ A=\clo{A} $ are later called ideals.
For well-structuredness, we demand that the wqo yields a simulation of smaller states by larger states. 
This condition is called \emph{compatibility}.
\begin{defn}[well-structured transition systems] \label{wellstr}
	Let $\tuple{\St,\pfeil}$ be transition system and $\wqo$ a decidable wqo on $ \St $,
	i.e., for each two given states $s,s' \in \St$, it is decidable whether $s \wqo s'$.
	The tuple $\tuple{\St,\wqo,\pfeil}$ is a \emph{(strongly) well-structured transition system}, if
	\begin{enumerate}[(i)]
		\item  The wqo is \emph{(strongly) compatible} with the transition relation, 
		i.e., for all $s_1, s'_1, s_2 \in \St$ with $s_1 \wqo s'_1$ and $s_1 \pfeil s_2$, 
		there exists $s'_2 \in S$ with $ s_2\wqo s'_2 $ and $s_1' \pfeil^* s'_2$ (strongly: $s'_1 \pfeil^1 s'_2$).
		\item For every $s \in \St$, a basis of $\clo{\pre(\clo{\{s\}})}$ is computable.  
	\end{enumerate}
\end{defn}

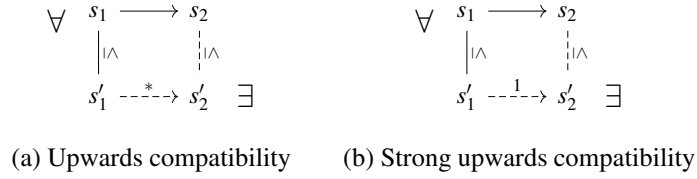
\begin{figure}[htb]\label{fig:UpwardsCompVisualization}
	\centering
	\begin{subfigure}[t]{0.3\textwidth}\centering
		\begin{tikzpicture}[scale=0.85,transform shape]
			\node[] (cd) {
				\begin{tikzcd}
				s_1 \arrow[r] \arrow[d, "\leq" labl, no head] & s_2 \arrow[d, "\leq" labl, no head, dashed] \\
				s_1' \arrow[r, "*", dashed]               & s_2'                                  
				\end{tikzcd}
			};
			\node at (cd.north west) [xshift=-2mm,yshift=-5mm] (A) {{\Large $ \forall $}};
			\node at (cd.south east) [xshift=2mm,yshift=5mm] (A) {{\Large $ \exists $}};
		\end{tikzpicture}
		\subcaption{Upwards compatibility}%
		\label{fig:UpwardsComp}
	\end{subfigure}
	\begin{subfigure}[t]{0.3\textwidth}\centering
		\begin{tikzpicture}[scale=0.85,transform shape]
			\node[] (cd) {
				\begin{tikzcd}
				s_1 \arrow[r] \arrow[d, "\leq"' labl, no head] & s_2 \arrow[d, "\leq" labl, no head, dashed] \\
				s_1' \arrow[r, "1", dashed]               & s_2'                                  
				\end{tikzcd}
			};
			\node at (cd.north west) [xshift=-2mm,yshift=-5mm] (A) {{\Large $ \forall $}};
			\node at (cd.south east) [xshift=2mm,yshift=5mm] (A) {{\Large $ \exists $}};
		\end{tikzpicture}
		\subcaption{Strong upwards compatibility}%
		\label{fig:StrongUpwardsComp}
	\end{subfigure}
	\caption{Visualization of the (strong) upwards compatibility property for transition systems.
	}
	\label{fig:WSTS-Vis}
\end{figure}

In Fig.~\ref{fig:WSTS-Vis}, both versions of compatibility are visualized.
The term \emph{(strongly) well-structured transition system} is often abbreviated by (S)WSTS.
In Sec.~\ref{sec:decidabilityResults}, we prove the decidability of resilience for SWSTSs. 
We include the definition of general WSTSs for clarity and to point out the differences.
Note that for GTSs, strong compatbility is achieved by applying the same (SPO) rule to bigger graphs. However,
in DPO, the bigger graph may not fullfill the dangling condition. Consider, e.g.,\ the rule which deletes a node. This rule
can be applied to the graph consisting of a single node but not to the graph \twonodesUN~in DPO. \par
The following result of K{\"o}nig \& St{\"u}ckrath terms sufficient conditions for GTSs to be well-structured.
\begin{lem}[\cite{Koenig17}] \label{Koenig}
Every graph transition system of bounded path length 
is strongly well-structured (equipped with the subgraph order).
\end{lem}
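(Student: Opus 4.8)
The plan is to verify directly that $\langle \GG, \dder_\R\rangle$ meets the three requirements of Def.~\ref{wellstr}: that the subgraph order is a decidable wqo on the state set $\GG$, that it is strongly compatible with $\dder_\R$, and that pred-bases are computable. The first requirement is essentially given: by Example~\ref{ex:WQOsOnGraphs}(i) the subgraph order restricted to a graph class of bounded path length is a wqo, and $G \wqo G'$ is decidable since it amounts to searching for a total injective morphism between two finite graphs.

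For strong compatibility, suppose $G_1 \wqo G_1'$ witnessed by a total injective $m \colon G_1 \injto G_1'$ and $G_1 \dder_r G_2$ at a match $g\colon L \injto G_1$. First I would apply the same rule $r$ at the composed match $g' = m \circ g$, which is again total and injective, obtaining a single step $G_1' \dder_r G_2'$ with $G_2' \in \GG$ by closure of $\GG$ under rule application. Then I would construct a total injective morphism $G_2 \injto G_2'$ from the two defining pushouts: elements of $G_2$ originating from $R$ are sent to the corresponding $R$-part of $G_2'$, while elements preserved from $G_1$ are sent via $m$. The one point needing care is the SPO deletion of dangling edges: since $g'$ factors through $m$, the items deleted in $G_1'$ are exactly the $m$-images of those deleted in $G_1$, so every item surviving into $G_2$ maps to a surviving item of $G_2'$ and no edge of $G_2$ becomes dangling in $G_2'$. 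This yields $G_2 \wqo G_2'$ in a single step, which is precisely strong compatibility.

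The technical heart is the computable pred-basis (condition (ii)). Since $\{G\}$ is a basis of $\clo{\{G\}}$, I would compute a basis of $\clo{\pre(\clo{\{G\}})}$ as follows. For each rule $r = \langle L \pmor R\rangle \in \R$, enumerate all graphs arising as a gluing (pushout) of $R$ and $G$ along a common subgraph; up to isomorphism this is a finite set, since $R$ and $G$ are finite and admit only finitely many partial overlaps. For each such gluing $H$, reverse the SPO step to obtain a candidate predecessor $G''$ with $G'' \dder_r H$ — replacing the $R$-image by $L$ and reinstating the structure that SPO may have deleted — and discard those not lying in $\GG$. Collecting all candidates over all rules and all gluings gives a finite set whose upward closure I claim equals $\pre(\clo{\{G\}})$, which is itself upward-closed by strong compatibility; extracting the pairwise incomparable minimal elements, which terminates because $\wqo$ is a wqo on $\GG$, produces the desired basis.

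I expect the main obstacle to be the correctness of this pred-basis construction, in two respects. \emph{Soundness} requires that each reversed step is genuinely a transformation $G'' \dder_r H$ with $G \injto H$, which forces careful bookkeeping of the SPO deletion and dangling semantics when un-applying the rule. \emph{Completeness} requires showing that every predecessor of $\clo{\{G\}}$ lies above one of the enumerated candidates: given $G'' \dder_r H''$ with $G \injto H''$, the image of $G$ in $H''$ together with the $R$-image carves out a finite overlap realized by one of the enumerated gluings, and the associated candidate embeds into $G''$. Ensuring that this candidate still respects the bound on path length, so that it remains inside $\GG$, is the remaining delicate point.
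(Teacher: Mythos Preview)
The paper does not give its own proof of this lemma; it is imported verbatim from K{\"o}nig \& St{\"u}ckrath~\cite{Koenig17}, with only the remark that the argument for directed labeled graphs is the same as for their hypergraphs. So there is no in-paper proof to compare against; your sketch is essentially a reconstruction of the cited argument.

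Your outline is on the right track and matches the structure of the original proof: the wqo property comes from Ding's theorem (your item~(i)), strong compatibility from applying the same SPO rule at the composed match (your item~(ii)), and the pred-basis from enumerating finite overlaps of $R$ with the target graph (your item~(iii)). Two points would need tightening to make it a full proof. First, in the strong-compatibility step your claim that ``the items deleted in $G_1'$ are exactly the $m$-images of those deleted in $G_1$'' is slightly off: $G_1'$ may contain additional edges outside $m(G_1)$ incident to deleted nodes, and these are also removed by the SPO semantics. What you actually need (and what holds) is only the inclusion: everything surviving in $G_2$ has an $m$-image surviving in $G_2'$. Second, in the pred-basis construction, ``reinstating the structure that SPO may have deleted'' hides the real combinatorial work: when un-applying $r$, one must enumerate all ways of attaching fresh edges to the nodes that $r$ deletes, since any such edge would be silently removed by SPO as dangling. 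In~\cite{Koenig17,Stueckrath16} this is handled by bounding the number and shape of such edges using the finiteness of the label alphabet and of $L$; without this bound your candidate set is not obviously finite. Your final worry about candidates leaving $\GG$ is legitimate and is exactly where the bounded-path-length hypothesis is used again to keep the construction inside the state set.
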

Note that in \cite{Koenig17}, K{\"o}nig \& St{\"u}ckrath consider labeled hypergraphs. 
However, the proof in this case is the same. 
The premise of bounded path length seems very restrictive, 
but we can still capture infinitely many graphs.
A usual example are graphs where the ``topology'' remains unchanged. 
It is also shown in \cite{Koenig17} that every \emph{lossy GTS} is well-structured 
w.r.t.\ the minor order and without restriction of the graph class. 
``Lossy'' means that every edge contraction rule is contained in the GTS. 
However, in this case, we do not obtain strong compatibility.

\begin{ass} 
	In the following, let $ \tuple{\St,\wqo, \pfeil} $
	be a strongly well-structured transition system.
\end{ass}
Upward- and downward-closed sets w.r.t.\ a given wqo are of special interest. 
Such sets are called ideals and used 
in Sec.~\ref{sec:adverseCondAndResProblems} to define resilience problems for WSTSs.
\begin{defn}[ideal] 
	An \emph{ideal} $\Ii \subseteq \St$ is an upward-closed set, 
	i.e., $\clo{\Ii}=\Ii$. 
	A \emph{bi-ideal} $\Ji\subseteq \St$ is an ideal which is also downward-closed, 
	i.e., $\clo{\Ji}=\Ji=\dlo{\Ji}$. 
	An \emph{anti-ideal} $\Ji \subseteq \St$ is a downward-closed set, 
	i.e.,\ $\dlo{\Ji}=\Ji$. The anti-ideal $\Ji$ is decidable if, given $s \in \St$, it is decidable whether $s \in \Ji$. 
\end{defn}

\begin{exmp}[ideal]
Let $\GG_\ell$ be a graph class of bounded path length. For every positive constraint $c$, $\Ii_c = \{ G \in \GG_\ell \,\vert\, G \models c \}$ is an ideal.
\end{exmp}
Bi-ideals often represent ``control states'' as in \cite{AbdullaCJT96}.
The notion of anti-ideal is the pendent to ideal. 
Since a downward-closed set does not have an ``upward-basis'' in general, 
we will demand that membership is decidable. 

\begin{exmp}[anti-ideal] 
Let $\GG_\ell$ be a graph class of bounded path length. For every negative constraint $c$, $\Ji_c = \{ G \in \GG \,\vert\, G \models c \}$ is a decidable anti-ideal.
\end{exmp}

The set of ideals of $\St$ is closed under preset, union, and intersection. 
\begin{fact}[stability of ideals] \label{stabideal}
	Let $\Ii,\Ji \subseteq \St$ be ideals. 
	Then the sets 
	$\pre(\Ii)$,
	$\Ii \cup \Ji$, and 
	$\Ii \cap \Ji$ are ideals.  
\end{fact}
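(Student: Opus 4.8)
The plan is to treat the three closure claims separately, dispatching the union and intersection cases by elementary upward-closure reasoning and reserving the standing assumption of strong compatibility for the preset case, which is the only one that interacts with the transition relation.

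First I would handle $\Ii \cup \Ji$ and $\Ii \cap \Ji$ directly from the definition of upward-closure, without touching $\pfeil$. For the union, take $s \in \Ii \cup \Ji$ and $t$ with $s \wqo t$; then $s$ lies in one of the two ideals, say $\Ii$, and upward-closedness of $\Ii$ gives $t \in \Ii \subseteq \Ii \cup \Ji$. For the intersection, if $s \in \Ii \cap \Ji$ and $s \wqo t$, then $s \in \Ii$ and $s \in \Ji$; applying upward-closedness of each ideal separately yields $t \in \Ii$ and $t \in \Ji$, hence $t \in \Ii \cap \Ji$. Both arguments use only that $\wqo$ is a quasi-order together with the definition of $\clo{\cdot}$.

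The interesting case, and the main obstacle, is showing that $\pre(\Ii)$ is upward-closed, since this is exactly where strong compatibility enters. I would take $s \in \pre(\Ii)$ together with $s \wqo t$, and aim to produce a \emph{single}-step successor of $t$ lying in $\Ii$. By definition $\pre(\Ii) = \pre^1(\Ii)$, so there is some $s' \in \Ii$ with $s \pfeil s'$. Applying strong compatibility (Def.~\ref{wellstr}(i)) to the comparison $s \wqo t$ and the transition $s \pfeil s'$ yields a state $t'$ with $s' \wqo t'$ and, crucially, $t \pfeil t'$ in exactly one step. Since $s' \in \Ii$ and $\Ii$ is upward-closed, $s' \wqo t'$ forces $t' \in \Ii$; together with the single transition $t \pfeil t'$ this gives $t \in \pre(\Ii)$.

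I expect the only delicate point to be the reliance on the strong form of compatibility: with ordinary (non-strong) compatibility one would obtain only $t \pfeil^* t'$, placing $t$ in $\pre^*(\Ii)$ rather than in the one-step preset $\pre(\Ii)$. Because the standing assumption is that $\tuple{\St,\wqo,\pfeil}$ is strongly well-structured, the single-step successor is guaranteed and the argument closes. No computability or decidability hypotheses are needed here; the fact is a purely structural consequence of upward-closedness and strong compatibility.
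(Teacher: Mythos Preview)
Your proof is correct. The paper states this result as a \emph{Fact} without proof, so there is no argument in the paper to compare against; your reasoning is the standard one and exactly what the authors would have had in mind. In particular, your observation that strong compatibility is needed precisely for the $\pre(\Ii)$ case (and that ordinary compatibility would only yield $\clo{\pre(\Ii)}$ upward-closed) matches the paper's later remarks in Sec.~\ref{sec:decidability} about why the approach of \cite{FinkelS01} must pass to $\clo{\pre(\Ii')}$ for general WSTSs.
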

A major point in our argumentation is the observation that 
every infinite ascending sequence of ideals w.r.t.\ a wqo eventually becomes stationary. 
 
\begin{lem}[\cite{AbdullaCJT96}] \label{Noether}
	For every infinite ascending sequence 
	$\tuple{\Ii_0 \subseteq \Ii_1 \subseteq \ldots}$ of ideals, 
	there exists a $k\ge 0$ s.t.\ $\Ii_{k}=\Ii_{k+1}$. 
	This directly implies $ \exists k_0\ge 0 \; \forall k\geq k_0 : \Ii_{k}=\Ii_{k_0} $.
\end{lem}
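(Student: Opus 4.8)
The plan is to prove the stronger stabilization claim $\exists k_0 \ge 0\ \forall k \ge k_0 : \Ii_k = \Ii_{k_0}$ directly, since the existence of a single index $k$ with $\Ii_k = \Ii_{k+1}$ is then an immediate consequence. I would argue by contradiction, assuming that the ascending chain $\tuple{\Ii_0 \subseteq \Ii_1 \subseteq \ldots}$ never becomes stationary. The essential structural ingredient is that each $\Ii_k$ is \emph{upward-closed} (ideals are upward-closed by definition), which is exactly the hypothesis that distinguishes this statement from a claim about arbitrary ascending chains of sets.

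First I would unfold what ``never stationary'' means: for every index $m$ there is some $m' > m$ with $\Ii_m \subsetneq \Ii_{m'}$, where the inclusion is strict because the chain is already ascending. Iterating this observation lets me extract a strictly increasing subsequence of indices $k_0 < k_1 < k_2 < \cdots$ with $\Ii_{k_0} \subsetneq \Ii_{k_1} \subsetneq \Ii_{k_2} \subsetneq \cdots$. For each $i \ge 1$ I would then pick a \emph{witness} $x_i \in \Ii_{k_i} \setminus \Ii_{k_{i-1}}$, which is nonempty precisely because the inclusion is strict.

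The heart of the argument is to show that the infinite sequence $\tuple{x_1, x_2, \ldots}$ contains no increasing pair, which contradicts the defining property of the wqo $\wqo$. Suppose $x_i \wqo x_j$ for some $i < j$. Since $x_i \in \Ii_{k_i}$ and $\Ii_{k_i}$ is upward-closed, it follows that $x_j \in \Ii_{k_i}$. But $i \le j-1$ and the indices are strictly increasing, so $k_i \le k_{j-1}$ and hence $\Ii_{k_i} \subseteq \Ii_{k_{j-1}}$ by ascendingness of the chain; therefore $x_j \in \Ii_{k_{j-1}}$, contradicting the choice $x_j \in \Ii_{k_j} \setminus \Ii_{k_{j-1}}$. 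Thus no increasing pair exists, the assumption fails, and the chain stabilizes.

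I expect the only delicate point to be the index bookkeeping, namely the extraction of the strictly increasing subsequence of jump indices and the verification $\Ii_{k_i} \subseteq \Ii_{k_{j-1}}$ used in the contradiction step; everything else is a direct application of upward-closedness together with the wqo property, so no serious obstacle remains once the witnessing sequence is set up correctly.
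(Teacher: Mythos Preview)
The paper does not supply its own proof of this lemma; it is simply cited from Abdulla et al.\ \cite{AbdullaCJT96}, so there is no in-paper argument to compare against. Your proposal is the standard wqo argument and is correct: extract a strictly increasing subchain, choose witnesses $x_i$ in the successive differences, and use upward-closedness of the $\Ii_{k_i}$ to show that any increasing pair $x_i \wqo x_j$ forces $x_j \in \Ii_{k_{j-1}}$, a contradiction.

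One minor organizational remark: the lemma as stated in the paper asserts the existence of a single stalling index $k$ with $\Ii_k = \Ii_{k+1}$ as the primary claim, and says eventual stabilization follows from this (implicitly by reapplying the primary claim to the strictly increasing subchain of jump points). You invert the order, proving stabilization first and deducing the stalling index trivially. Both directions are sound and essentially the same argument; your route is arguably cleaner since it avoids the reapplication step.
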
  
Since ideals are in general infinite, 
we often want a finite representation. 
Similar to algebraic structures, 
ideals are represented by a finite basis (a minimal generating set).
Indeed, every ideal has a basis and every basis is finite.
We consider bases for complexity reasons. In theory, finite generating sets are sufficient to carry out our approach.
\begin{fact}[\cite{AbdullaCJT96}] \label{fbasis}
	(i) For every ideal $\Ii \subseteq \St$, 
	there exists a finite basis $\bI$ of $\Ii$. 
	(ii) Given a finite set $A \subseteq \St$ with $\Ii = \clo{A}$, 
	we can compute a finite basis $\bI$ of $\Ii$.
\end{fact}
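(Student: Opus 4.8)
The plan is to prove (i) by exhibiting the set of $\wqo$-minimal elements of $\Ii$ as a basis, and to prove (ii) by pruning the given finite generating set $A$ down to an antichain, using decidability of $\wqo$. The whole argument rests on two standard consequences of the wqo property that I would extract first: a wqo admits (a) no infinite strictly descending chain (well-foundedness) and (b) no infinite antichain. Both follow directly from the defining property that every infinite sequence contains an increasing pair. Writing $a \prec b$ for the strict part of $\wqo$ (i.e.\ $a \wqo b$ and $b \not\wqo a$), a strictly descending chain $x_0 \succ x_1 \succ \cdots$ would, by transitivity, force any increasing pair $x_i \wqo x_j$ ($i<j$) to collapse the strict steps between them, contradicting $\prec$; and an infinite antichain, read as an infinite sequence, contains no comparable pair and in particular no increasing pair.

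For (i), by well-foundedness of $\prec$, starting from any $x \in \Ii$ and repeatedly descending I must reach after finitely many steps a $\prec$-minimal element $m \in \Ii$ with $m \wqo x$. Let $M \subseteq \Ii$ collect all $\prec$-minimal elements of $\Ii$; among these, comparability forces equivalence (if $m_1 \wqo m_2$ for minimal $m_1,m_2$, then by minimality also $m_2 \wqo m_1$). I then pick one representative per $\wqo$-equivalence class of $M$ to obtain $\bI$. Three checks finish the part: $\clo{\bI}=\Ii$ because every $x\in\Ii$ lies above some minimal $m$, which is equivalent to some $b\in\bI$, so $b\wqo x$; distinct elements of $\bI$ are incomparable, since two comparable minimal elements are equivalent and hence share a representative; and $\bI$, being an antichain, is finite by (b).

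For (ii), I reuse the same idea but restricted to the finite set $A$, where the decidability of $\wqo$ (Def.~\ref{wellstr}) makes everything effective. Since $A$ is finite I can compute the full comparability relation on $A$ by finitely many $\wqo$-tests. I discard every $a \in A$ for which some distinct $a' \in A$ satisfies $a' \wqo a$, breaking ties inside an equivalence class by keeping a single element (say, the first in a fixed enumeration). The resulting $\bI \subseteq A \subseteq \Ii$ is a finite antichain, and it still generates $\Ii$: each discarded $a$ sits above a retained element, so $\clo{A}\subseteq\clo{\bI}$, while $\clo{\bI}\subseteq\clo{A}=\Ii$ is immediate; hence $\clo{\bI}=\Ii$ and $\bI$ is a computed finite basis.

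The only genuinely delicate point is the bookkeeping forced by $\wqo$ being a preorder rather than a partial order: ``minimal element'' and ``antichain'' must be read modulo $\wqo$-equivalence, and the basis definition's incomparability requirement ($b_1\neq b_2 \Rightarrow b_1\not\wqo b_2$) is precisely what obliges us to retain exactly one representative per equivalence class. Everything else is a direct application of properties (a) and (b), together with, for part (ii), the assumed decidability of $\wqo$.
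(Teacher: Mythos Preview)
Your argument is correct and is the standard proof of this fact: extract well-foundedness and the no-infinite-antichain property from the wqo definition, take minimal elements modulo $\wqo$-equivalence for (i), and prune the finite generating set using decidability of $\wqo$ for (ii). Note, however, that the paper does not give its own proof of this statement---it is stated as a \emph{Fact} with a citation to Abdulla et al.\ \cite{AbdullaCJT96}---so there is nothing to compare against; your proposal simply supplies the details the paper omits.
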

\subsection{Ideal Reachability}\label{IdReach}
In \cite{AbdullaCJT96}, Abdulla et al.\ exploit Lemma~\ref{Noether} 
to show the decidability of \emph{ideal reachability} (also called \emph{coverability}) for strongly well-structured transition systems. 
The corresponding algorithm forms the basis of our results. 
We present its basic idea. 
For any ideal $\Ii$, another ideal $\Ii^*$ is constructed, 
s.t.\ $\exists s' \in \Ii : s \pfeil^* s'$ iff $s \in \Ii^*$. 
This is clearly the case for $\Ii^*=\pre^*(\Ii)= \bigcup_{j\ge 0} \pre^j (\Ii)$. 
The idea is to iteratively construct the sequence of the ideals $\Ii^k=\bigcup_{0\le j \le k} \pre^j (\Ii)$ until it becomes stable.
\begin{defn}[index]
For an ideal $\Ii \subseteq \St$ and $k\ge 0$, let $\Ii^k = \bigcup_{0\le j \le k} \pre^j (\Ii) \subseteq \Ii^{k+1}$. The \emph{index} $k(\Ii)$ is the smallest $k_0$ s.t.\ $\Ii^k = \Ii^{k_0}$ for all $k \ge k_0$.
\end{defn}
Lemma~\ref{Noether} ensures that $k(I)$ always exists. However, we have to show that $\Ii^k = \Ii^{k+1}$ implies $k(\Ii) \le k$ to obtain a stop condition.
This follows by the observation that $\Ii^{k+1}=\Ii \cup \pre(\Ii^k)$.
\begin{fact}[stop condition] \label{stopcond}
Let $\Ii \subseteq \St$ be an ideal and $k\ge 0$ s.t.\ $\Ii^k = \Ii^{k+1}$, then $\Ii^\ell = \Ii^k$ for all $\ell \ge k$, i.e.,\ $k(\Ii) \le k$. This also implies that $\pre^*(\Ii) = \Ii^k$.
\end{fact}

Since ideals are infinite, we cannot carry this construction out directly, but we use a basis for representing an ideal. If we can show the computability of a basis in every iteration step, we obtain an algorithm which can decide whether we can reach an ideal $\Ii$ from a given state $s$.

\begin{lem}[\cite{AbdullaCJT96}] \label{Abdulla} Given a basis of an ideal $\Ii \subseteq \St$, and a state $s$ of a strongly well-structured transition system, we can decide whether we can reach $\Ii$ from $s$. 
\end{lem}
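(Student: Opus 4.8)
The plan is to reduce the reachability question to an explicit membership test against a computable finite basis. Recall that ``we can reach $\Ii$ from $s$'' means $\exists s' \in \Ii : s \pfeil^* s'$, i.e.\ $s \in \pre^*(\Ii)$. By Fact~\ref{stopcond} we have $\pre^*(\Ii)=\Ii^{k(\Ii)}$, and each $\Ii^k$ is an ideal by Fact~\ref{stabideal} (here strong compatibility is exactly what guarantees that the one-step preset of an ideal is again an ideal, so the iteration can be phrased in terms of $\pre$ rather than $\pre^*$). Hence it suffices to compute a finite basis of the ideal $\pre^*(\Ii)$ and then test whether $s$ lies above some basis element. So first I would set up the ascending chain $\Ii^0 \subseteq \Ii^1 \subseteq \ldots$ represented by bases $\bI_0,\bI_1,\ldots$, starting from the given basis $\bI_0 = \bI$ of $\Ii=\Ii^0$.

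The core step is to show that, from a finite basis $\bI_k$ of $\Ii^k$, a finite basis $\bI_{k+1}$ of $\Ii^{k+1}=\Ii\cup\pre(\Ii^k)$ is computable. Writing $\Ii^k = \clo{\bI_k} = \bigcup_{b\in\bI_k}\clo{\{b\}}$ and using that $\pre$ distributes over unions, one obtains $\pre(\Ii^k)=\bigcup_{b\in\bI_k}\pre(\clo{\{b\}})$. Each set $\pre(\clo{\{b\}})$ is an ideal by Fact~\ref{stabideal}, so it coincides with $\clo{\pre(\clo{\{b\}})}$, and condition~(ii) of Def.~\ref{wellstr} yields a finite basis of it. Collecting these bases for all $b\in\bI_k$ together with $\bI$ gives a finite set generating $\Ii^{k+1}$, from which Fact~\ref{fbasis}(ii) extracts a finite basis $\bI_{k+1}$.

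Next I would implement the stop condition. Since $\Ii^k\subseteq\Ii^{k+1}$ always holds, the equality $\Ii^k=\Ii^{k+1}$ is equivalent to $\clo{\bI_{k+1}}\subseteq\clo{\bI_k}$, i.e.\ to the condition that for every $b'\in\bI_{k+1}$ there is some $b\in\bI_k$ with $b\wqo b'$. As the bases are finite and $\wqo$ is decidable, this test is decidable. By Lemma~\ref{Noether} the ascending sequence of ideals stabilizes, so such a $k$ exists; at the first one, Fact~\ref{stopcond} guarantees $\pre^*(\Ii)=\Ii^k=\clo{\bI_k}$, and the iteration halts. Finally, $s\in\pre^*(\Ii)$ is decided by checking whether some $b\in\bI_k$ satisfies $b\wqo s$, which is again decidable.

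The main obstacle is the per-step basis computation, and in particular the two reductions it rests on: that $\pre$ distributes over unions (so the computation decomposes over the finitely many basis elements) and that $\pre(\clo{\{b\}})$ is already upward-closed (so condition~(ii) of Def.~\ref{wellstr} applies verbatim, without an extra upward-closure step). Everything else---the decidability of the stop test and of the final membership query---follows from decidability of $\wqo$ together with finiteness of the bases, while termination of the loop is exactly Lemma~\ref{Noether}.
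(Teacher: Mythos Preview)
Your proposal is correct and follows essentially the same approach as the paper: iteratively compute bases of the ideals $\Ii^{k+1}=\Ii\cup\pre(\Ii^k)=\Ii\cup\bigcup_{b\in\bI_k}\pre(\clo{\{b\}})$ via condition~(ii) of Def.~\ref{wellstr} and Fact~\ref{fbasis}, detect stabilization, and finally test membership of $s$. Your write-up is in fact more explicit than the paper's proof about why $\pre(\clo{\{b\}})$ is already upward-closed (so condition~(ii) applies without an extra closure), about how the stop test is decided from the finite bases, and about the final membership check---all points the paper leaves implicit.
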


\begin{proof}
We have to show that we can compute a basis of $\Ii^{k+1}$ if we are given a basis of $\Ii^k$. Then the decidability of the stop condition follows directly. Let $B$ be a basis of $\Ii^k$. 
We have 
\begin{equation*}
\Ii^{k+1}=\Ii \cup \pre(\Ii^k) = \Ii \cup \bigcup_{s' \in B} \pre(\clo{\{s'\}}).
\end{equation*} 
Since $ \pre(\clo{\{s'\}})$ is computable for any $s' \in \St$ by definition, we obtain a finite generating set of $\Ii^{k+1}$. By Fact \ref{fbasis}, we can compute a basis of $\Ii^{k+1}$. 
\end{proof}

\section{Adverse Conditions and Resilience Problems}\label{sec:adverseCondAndResProblems}
We put adverse conditions and resilience into context by using joint graph transformation
systems~\cite{Oezkan20}. 
Abstracting from the setting of GTSs,
we identify resilience problems for TSs.
\subsection{Joint Graph Transformation Systems}
We recapitulate the modeling of adverse conditions by joint graph transformation
systems, introduced in~\cite{Oezkan20}.
We define joint graph transformation systems, which involve a system and an environment, as well as an automaton 
modeling the interaction between them. 
Both, system and environment, are~GTSs.
\begin{ass} 
	In the following, let $\Lambda$ be a fixed label alphabet, 
	and $\Sys$ and $\Env$ be GTSs over $\Lambda$, called \emph{system} and \emph{environment}, respectively. W.l.o.g.,
	 we assume that $\Sys$ and $\Env$ are disjoint. 
	(If $\Sys$ and $\Env$ share a common rule $r$, 
	we assign $r$ different names in $\Sys$ and $\Env$.) 
\end{ass}
We specify the class of automata which are used to regulate 
the interaction between system and environment. 
These control automata are similar to $\omega$-automata, 
see, e.g.,\ \cite{Thomas90}.
\begin{defn}[control automaton] 
	A \emph{control automaton} of $\tuple{\Sys, \Env}$ is 
	a tuple $A=\tuple{Q, q_0, \delta, \select}$ consisting of a 
	finite set $Q$ disjoint from $\Lambda$, called the \emph{state set}, 
	an \emph{initial state} $q_0 \in Q$, a \emph{transition relation} 
	$\delta \subseteq Q \times Q$, and a function $\select: \delta \to \Pot ( \Sys \cup \Env ) $ 
	(into the power set of $\Sys \cup \Env$), called the \emph{selection function}. 
\end{defn}

A joint graph transformation system is obtained 
by \emph{synchronizing} the system, repectively, the environment, 
with the control automaton, 
and then joining both sets of enriched rules. 

\begin{defn}[joint graph transformation system] 
	Let $A=\tuple{Q,q_0 , \delta, \select }$ be a control automaton of $\tuple{\Sys,\Env}$. 
	The \emph{joint graph transformation system} of $\Sys$ and $\Env$ w.r.t.\ $A$ 
	is the graph transformation system $\Sys_A \cup  \Env_A$
	where for a rule set $\R \in \{\Sys, \Env\}$, the \emph{enriched rule set} $\R_A$ is given by
	\begin{equation*} 
		\R_A = \{ \tuple{L,q} \pmor \tuple{R,q'}\,\vert\, 
		\tuple{q,q'} \in \delta\text{ and } 
		\tuple{L \pmor R} \in \R \cap \select \tuple{q,q'} 
		\},
	\end{equation*} 
	and for a graph $G$ and a state $q$, the tuple $\tuple{G,q}$ denotes the disjoint union of $G$ and a node labeled with $q$.
In the partial morphism $\tuple{L,q} \pmor \tuple{R,q'}$, the node labeled with $q$ is mapped to the node labeled with~$q'$. 
	
\end{defn}

We refine our notion of joint graph transformation systems, 
namely to \emph{annotated joint graph transformation systems}, 
which also carry the information whether the last applied rule was a system or environment rule. 
This is realized by a node labeled with ``$\ssys$'' or ``$\senv$''.  
\begin{nota} 
	For a joint graph transformation system $\Sys_A \cup  \Env_A$, 
	the symbol $\m(\Sys)=\ssys$ or $\m(\Env)=\senv$, 
	is the \emph{marker} of $\Sys$ or $\Env$, respectively. 
	For a rule $r \in \R$ and $\R \in \{\Sys, \Env\}$, 
	let $\m(r)=\m(\R)$ be the marker of~$r$. 
	The set of all markers $\Mrk= \{\top, \ssys, \senv\}$ includes also the symbol
	$\top$, usually indicating a start graph. 
\end{nota}

For the explicit construction, we can use \emph{premarkers} to reduce the number of rules. 
For a more extensive account on this technical detail, consult \cite{Oezkan20}.

\begin{defn}[annotated joint graph transformation system] 
	Let $\Sys_A \cup  \Env_A$ be a joint graph transformation systems w.r.t.\ 
	a control automaton $A$ of $\tuple{\Sys,\Env}$. 
	The \textit{annotated joint graph transformation system} of $\Sys$ and $\Env$ w.r.t.\ $A$ 
	is $\Sys_A' \cup  \Env_A' $, 
	where for a rule set $\R \in \{\Sys,\Env\}$, the \textit{marked rule set} $\R'_A$ is defined as
	\begin{equation*}
	\R'_A = \{ \tuple{L,q,m} \pmor \tuple{R,q',m'} \,\vert\, 
	\tuple{L,q} \pmor \tuple{R,q'} \in \R_A,\, m \in \Mrk,\, m' =\m( \R )\} ,
	\end{equation*}
	where $\tuple{G,q,m}$ in turn denotes the disjoint union of a graph $G$, a node labeled with a state $q$, and a node labeled with a marker $m$.
	In the partial morphism $\tuple{L,q,m} \pmor \tuple{R,q',m'}$, the node lableled with~$m$ is mapped to the node labeled with $m'$. 
\end{defn}
We explicate the state set of annotated joint GTSs. These graphs are of the form $\tuple{G,q,m}$ for a state~$q$ of the control automaton and a marker $m$. 
We denote a class of all such graphs by $\GG \oplus Q \oplus \Mrk$. Using such graphs instead of the product of graphs we can directly apply the result of \cite{Koenig17} for GTSs (Lemma~\ref{Koenig}).
\begin{defn}[joint graph transition system]
Let $(\Sys_A \cup  \Env_A)'$ be an annotated joint GTS and $\GG'$ be a class of graphs which is of the form $\GG \oplus Q \oplus \Mrk$ and closed under rule application of $(\Sys_A \cup  \Env_A)'$. The 
graph transition system $\tuple{\GG', \trafo_{(\Sys_A \cup  \Env_A)'}}$ is called \emph{annotated joint graph transition system}.
\end{defn}
Note that we usually begin our analysis at a start graph of the form $\tuple{G,q_0 , \top}$.

\begin{exmp}[supply chain]\label{prodch}
We model a simple supply chain
with graph transformation rules. The infrastructure (topology) is given in the following start graph: 
\begin{center}\begin{tikzpicture}[baseline=(product.center)]
	\renewcommand{\xdis}{8mm}
	\renewcommand{\ydis}{3.5mm}
	\node[circle, draw=black, inner sep=0pt, minimum size=4mm] (product) {{\footnotesize $ P $}};
	\node[circle, draw=black, inner sep=0pt, minimum size=4mm, right=of product] (warehouse) {{\footnotesize $ W $}};
	\node[circle, draw=black, inner sep=0pt, minimum size=4mm, above right=of warehouse] (s1) {{\footnotesize $ S_1 $}};
	\node[circle, draw=black, inner sep=0pt, minimum size=4mm, below right=of warehouse] (s2)  {{\footnotesize $ S_2 $}};
	
	\node[circle, inner sep=0pt, draw=black, fill=black, minimum size=1.5mm] at (warehouse) [yshift=4mm] (tokenP) {};
	\node[circle, inner sep=0pt, draw=black, fill=black, minimum size=1.5mm] at (s1) [xshift=4mm] (tokenP1) {};
	\node[circle, inner sep=0pt, draw=black, fill=black, minimum size=1.5mm] at (s2) [xshift=4mm] (tokenP2) {};
	\draw[-]
		(product) edge (warehouse)
		(warehouse) edge (tokenP)
		(warehouse) edge (s1)
		(warehouse) edge (s2)
		(s1) edge (tokenP1)
		(s2) edge (tokenP2)
	;
\end{tikzpicture}
\end{center}
A production site ($ P $) is connected to a warehouse ($ W $)
which again is connected to two stores $ S_1 $ and~$ S_2 $. 
Each black node indicates one product at the corresponding (connected) location. 
The behavior in this production chain is modeled by the graph transformation rules in Fig.~\ref{fig:ProductChainGTS}.
The system rules consists of 
$\mathtt{pr}$ (the completion of a product at the production site $ P $),
$\mathtt{tr}$ (transporting a product from~$ P $ to the warehouse $ W $),
and $\mathtt{sh_1}$ and $\mathtt{sh_2}$ (shipping a product from $ W $ to one of the two stores $ S_1 $, $ S_2 $).
The environment rules describe external impacts.
Namely,
$\mathtt{ac}$ describes an accident in the warehouse which leads to the loss of one product,
and $\mathtt{b_1}$ and $\mathtt{b_2}$ describe that a product is bought from $ S_1 $ or $ S_2 $, respectively.
\begin{figure}[htb]
	\begin{subfigure}[b]{0.71\textwidth}
		\centering
		$\Sys \left\{ 
		\begin{array}{ll} 
		\mathtt{pr:} &
		\left\langle
		\begin{tikzpicture}[baseline=(base.center)]
		\renewcommand{\xdis}{10mm}
		\node[circle, draw=black, inner sep=0pt, minimum size=4mm] (product) [label={[inner sep=0.5pt]left:{\scriptsize $ 1 $}}] {{\footnotesize $ P $}};
		\node[] at ([yshift=-1mm]product.center) (base) {};
		\end{tikzpicture}
		~\rightharpoonup~ 
		\begin{tikzpicture}[baseline=(base.center)]
		\renewcommand{\xdis}{10mm}
		\node[circle, draw=black, inner sep=0pt, minimum size=4mm] (product) [label={[inner sep=0.5pt]left:{\scriptsize $ 1 $}}] {{\footnotesize $ P $}};
		\node[circle, inner sep=0pt, draw=black, fill=black, minimum size=1.5mm] at (product) [xshift=4mm] (tokenP) {};
		\node[] at ([yshift=-1mm]product.center) (base) {};
		\draw[-]
		(product) edge (tokenP)
		;
		\end{tikzpicture} 
		\right\rangle\vspace*{1mm}\\ 
		
		\mathtt{tr:} &
		\left\langle
		\begin{tikzpicture}[baseline=(base.center)]
		\renewcommand{\xdis}{7mm}
		\node[circle, draw=black, inner sep=0pt, minimum size=4mm] (product)  [label={[inner sep=0.5pt]-10:{\scriptsize $ 1 $}}] {{\footnotesize $ P $}};
		\node[circle, draw=black, inner sep=0pt, minimum size=4mm, right=of product] (warehouse)  [label={[inner sep=0.5pt]-10:{\scriptsize $ 2 $}}]{{\scriptsize $ W $}};
		\node[circle, inner sep=0pt, draw=black, fill=black, minimum size=1.5mm] at (product) [xshift=-4mm] (tokenP)  [label={[inner sep=0.5pt]left:{\scriptsize $ 3 $}}] {};
		\node[] at ([yshift=-1mm]product.center) (base) {};
		\draw[-]
		(product) edge (warehouse)
		(product) edge (tokenP)
		;
		\end{tikzpicture}
		~\rightharpoonup~ 
		\begin{tikzpicture}[baseline=(base.center)]
		\renewcommand{\xdis}{7mm}
		\node[circle, draw=black, inner sep=0pt, minimum size=4mm] (product)  [label={[inner sep=0.5pt]-10:{\scriptsize $ 1 $}}] {{\footnotesize $ P $}};
		\node[circle, draw=black, inner sep=0pt, minimum size=4mm, right=of product] (warehouse)  [label={[inner sep=0.4pt]-25:{\scriptsize $ 2 $}}] {{\footnotesize $ W $}};
		\node[circle, inner sep=0pt, draw=black, fill=black, minimum size=1.5mm] at (warehouse) [xshift=4mm] (tokenW)  [label={[inner sep=0.5pt]right:{\scriptsize $ 3 $}}] {};
		\node[] at ([yshift=-1mm]product.center) (base) {};
		\draw[-]
		(product) edge (warehouse)
		(warehouse) edge (tokenW)
		;
		\end{tikzpicture} 
		\right\rangle  \vspace*{1mm}\\ 
		
		\mathtt{sh_1:} &
		\left\langle
		\begin{tikzpicture}[baseline=(base.center)]
		\renewcommand{\xdis}{7mm}
		\node[circle, draw=black, inner sep=0pt, minimum size=4mm] (product)  [label={[inner sep=0.5pt]-10:{\scriptsize $ 1 $}}] {{\footnotesize $ W $}};
		\node[circle, draw=black, inner sep=0pt, minimum size=4mm, right=of product] (warehouse)  [label={[inner sep=0.5pt]-10:{\scriptsize $ 2 $}}] {{\footnotesize $ S_1 $}};
		\node[circle, inner sep=0pt, draw=black, fill=black, minimum size=1.5mm] at (product) [xshift=-4mm] (tokenP)  [label={[inner sep=0.5pt]left:{\scriptsize $ 3 $}}] {};
		\node[] at ([yshift=-1mm]product.center) (base) {};
		\draw[-]
		(product) edge (warehouse)
		(product) edge (tokenP)
		;
		\end{tikzpicture}
		~\rightharpoonup~ 
		\begin{tikzpicture}[baseline=(base.center)]
		\renewcommand{\xdis}{7mm}
		\node[circle, draw=black, inner sep=0pt, minimum size=4mm] (product) [label={[inner sep=0.5pt]-10:{\scriptsize $ 1 $}}] {{\footnotesize $ W $}};
		\node[circle, draw=black, inner sep=0pt, minimum size=4mm, right=of product] (warehouse) [label={[inner sep=0.4pt]-25:{\scriptsize $ 2 $}}] {{\footnotesize $ S_1 $}};
		\node[circle, inner sep=0pt, draw=black, fill=black, minimum size=1.5mm] at (warehouse) [xshift=4mm] (tokenW) [label={[inner sep=0.5pt]right:{\scriptsize $ 3 $}}] {};
		\node[] at ([yshift=-1mm]product.center) (base) {};
		\draw[-]
		(product) edge (warehouse)
		(warehouse) edge (tokenW)
		;
		\end{tikzpicture} 
		\right\rangle \vspace*{1mm}\\ 
		
		\mathtt{sh_2:} &
		\left\langle
		\begin{tikzpicture}[baseline=(base.center)]
		\renewcommand{\xdis}{7mm}
		\node[circle, draw=black, inner sep=0pt, minimum size=4mm] (product) [label={[inner sep=0.5pt]-10:{\scriptsize $ 1 $}}] {{\footnotesize $ W $}};
		\node[circle, draw=black, inner sep=0pt, minimum size=4mm, right=of product] (warehouse) [label={[inner sep=0.5pt]-10:{\scriptsize $ 2 $}}] {{\footnotesize $ S_2 $}};
		\node[circle, inner sep=0pt, draw=black, fill=black, minimum size=1.5mm] at (product) [xshift=-4mm] (tokenP) [label={[inner sep=0.5pt]left:{\scriptsize $ 3 $}}] {};
		\node[] at ([yshift=-1mm]product.center) (base) {};
		\draw[-]
		(product) edge (warehouse)
		(product) edge (tokenP)
		;
		\end{tikzpicture}
		~\rightharpoonup~ 
		\begin{tikzpicture}[baseline=(base.center)]
		\renewcommand{\xdis}{7mm}
		\node[circle, draw=black, inner sep=0pt, minimum size=4mm] (product)  [label={[inner sep=0.5pt]-10:{\scriptsize $ 1 $}}] {{\footnotesize $ W $}};
		\node[circle, draw=black, inner sep=0pt, minimum size=4mm, right=of product] (warehouse) [label={[inner sep=0.4pt]-25:{\scriptsize $ 2 $}}] {{\footnotesize $ S_2 $}};
		\node[circle, inner sep=0pt, draw=black, fill=black, minimum size=1.5mm] at (warehouse) [xshift=4mm] (tokenW)  [label={[inner sep=0.5pt]right:{\scriptsize $ 3 $}}] {};
		\node[] at ([yshift=-1mm]product.center) (base) {};
		\draw[-]
		(product) edge (warehouse)
		(warehouse) edge (tokenW)
		;
		\end{tikzpicture} 
		\right\rangle 
	\end{array} 
	\right.$\hspace*{-0.5mm}\mbox{,  
	}$\Env \left\{ 
	\begin{array}{ll}
		\mathtt{ac:} &
		\left\langle
		\begin{tikzpicture}[baseline=(base.center)]
		\renewcommand{\xdis}{10mm}
		\node[circle, draw=black, inner sep=0pt, minimum size=4mm] (product) [label={[inner sep=0.4pt]left:{\scriptsize $ 1 $}}] {{\footnotesize $ W $}};
		\node[circle, inner sep=0pt, draw=black, fill=black, minimum size=1.5mm] at (product) [xshift=4mm] (tokenP)  {};
		\node[] at ([yshift=-1mm]product.center) (base) {};
		\draw[-]
		(product) edge (tokenP)
		;
		\end{tikzpicture} 
		~\rightharpoonup~ 
		\begin{tikzpicture}[baseline=(base.center)]
		\renewcommand{\xdis}{10mm}
		\node[circle, draw=black, inner sep=0pt, minimum size=4mm] (product) [label={[inner sep=0.4pt]left:{\scriptsize $ 1 $}}] {{\footnotesize $ W $}};
		\node[] at ([yshift=-1mm]product.center) (base) {};
		\end{tikzpicture}
		\right\rangle\vspace*{1mm}\\ 
		
		\mathtt{b_1:} &
		\left\langle
		\begin{tikzpicture}[baseline=(base.center)]
		\renewcommand{\xdis}{10mm}
		\node[circle, draw=black, inner sep=0pt, minimum size=4mm] (product) [label={[inner sep=0.4pt]left:{\scriptsize $ 1 $}}] {{\footnotesize $ S_1 $}};
		\node[circle, inner sep=0pt, draw=black, fill=black, minimum size=1.5mm] at (product) [xshift=4mm] (tokenP)  {};
		\node[] at ([yshift=-1mm]product.center) (base) {};
		\draw[-]
		(product) edge (tokenP)
		;
		\end{tikzpicture} 
		~\rightharpoonup~ 
		\begin{tikzpicture}[baseline=(base.center)]
		\renewcommand{\xdis}{10mm}
		\node[circle, draw=black, inner sep=0pt, minimum size=4mm] (product) [label={[inner sep=0.4pt]left:{\scriptsize $ 1 $}}] {{\footnotesize $ S_1 $}};
		\node[] at ([yshift=-1mm]product.center) (base) {};
		\end{tikzpicture}
		\right\rangle\vspace*{1mm}\\ 
		
		\mathtt{b_2:} &
		\left\langle
		\begin{tikzpicture}[baseline=(base.center)]
		\renewcommand{\xdis}{10mm}
		\node[circle, draw=black, inner sep=0pt, minimum size=4mm] (product) [label={[inner sep=0.4pt]left:{\scriptsize $ 1 $}}]{{\footnotesize $ S_2 $}};
		\node[circle, inner sep=0pt, draw=black, fill=black, minimum size=1.5mm] at (product) [xshift=4mm] (tokenP) {};
		\node[] at ([yshift=-1mm]product.center) (base) {};
		\draw[-]
		(product) edge (tokenP)
		;
		\end{tikzpicture} 
		~\rightharpoonup~ 
		\begin{tikzpicture}[baseline=(base.center)]
		\renewcommand{\xdis}{10mm}
		\node[circle, draw=black, inner sep=0pt, minimum size=4mm] (product) [label={[inner sep=0.4pt]left:{\scriptsize $ 1 $}}] {{\footnotesize $ S_2 $}};
		\node[] at ([yshift=-1mm]product.center) (base) {};
		\end{tikzpicture}
		\right\rangle
	\end{array}
	\right.$
	\caption{The two GTSs $ \Sys $ (system) and $ \Env $ (environment)}
	\label{fig:ProductChainGTS}
	\end{subfigure}
	\begin{subfigure}[b]{0.28\textwidth}
		\centering
		\begin{tikzpicture}
		\renewcommand{\xdis}{1.7cm}
		\renewcommand{\ydis}{0.6cm}
		\node[graphnode,fill=white,scale=2,inner sep=1pt] (init) {$ e $};
		\node[graphnode, right=of init] (d) {};
		\node[graphnode, right=of d] (dd) {};
		\node[graphnode,above=of d] (pt) {};
		\node[graphnode] at ($ (init)!0.5!(pt) $) (p) {};
		\node[graphnode,above=of dd] (ptpt) {};
		\node[graphnode] at ($ (pt)!0.5!(ptpt) $) (ptp) {};
		\node[graphnode] at ($ (d)!0.5!(dd) $)  (dp) [yshift=-2*\ydis] {};
		
		\node at (init) [xshift=-6mm] (root) {};
		\draw[->]
		(root)	edge	(init)
		(init)	edge node[above,sloped,inner sep=1pt]{\footnotesize$\mathtt{pr}$} (p)
		(p)	edge node[above,sloped,inner sep=1pt]{\footnotesize$\mathtt{tr}$} (pt)
		(pt)	edge node[above,sloped,inner sep=1pt]{\footnotesize$\mathtt{pr}$} (ptp)
		(ptp)	edge node[above,sloped,inner sep=1pt]{\footnotesize$\mathtt{tr}$} (ptpt)
		(d)	edge node[below,sloped,inner sep=1pt]{\footnotesize$\mathtt{pr}$} (dp)
		(dp)	edge node[below,sloped,inner sep=1pt]{\footnotesize$\mathtt{tr}$} (dd)
		(init)	edge node[below,sloped,inner sep=1pt]{\footnotesize$\mathtt{sh_i}$} (d)
		(d)	edge node[below,sloped,inner sep=1pt]{\footnotesize$\mathtt{sh_i}$} (dd)
		(pt)	edge node[below,sloped,inner sep=1pt,pos=0.3]{\footnotesize$\mathtt{sh_i}$} (dd)
		;
		\draw[->, rounded corners]
		(ptpt) -- ++(0,0.4cm) -- node[above,inner sep=1pt, pos=0.6]{\footnotesize$\mathtt{ac}$, $ \mathtt{b_i} $} ++(-2*\xdis,0) -- (init)
		;
		\draw[->, rounded corners]
		(dd) -- ++(0,-0.8cm) -- node[above,inner sep=1pt, pos=0.6]{\footnotesize $ \mathtt{b_i} $} ++(-2*\xdis,0) -- (init)
		;
		\end{tikzpicture}
		\caption{Control automaton}
		\label{fig:ProductChainAutomaton}
	\end{subfigure}
	\caption{A joint GTS consisting of a GTS for system and environment, each, and a control automaton.}
	\label{fig:ProductChainJointGTS}
\end{figure}
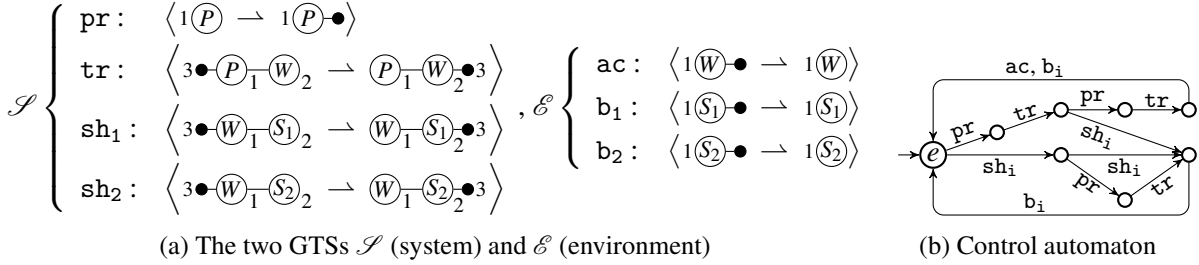
The control automaton in Fig.~\ref{fig:ProductChainAutomaton}
describes the possible order of rule applications. 
We are interested in the question when
the product is again \emph{in stock} (at least $1$ product in the warehouse and in each of both stores) whenever a customer buys a product or when an accident
in the warehouse happens. 
After each such transition, the automaton is in the state~$ e $.
Regardless of the current situation, in $17$ steps we can accomplish that the product is in stock by first producing 
and transporting $6$ products with a following accident ($ 3 $ products will get lost) and shipping them to the stores afterwards. However, what is the minimal number of steps
in which we can reach a situation where the product is in stock whenever someone bought a product or a product got lost in an accident?  
\end{exmp}
We come back to that question in Ex.~\ref{prodchPN} in Sec.~\ref{sec:PN}. We describe the setting for joint GTSs which we investigate: 
Consider a safety condition $c$, given as positive constraint, and the set of graphs 
${\Ii_c =}$ ${\{G' \in \GG \oplus Q \oplus \Mrk \,\vert\, G' \models c \}}$ which satisfy $c$.
Similarly, let $\Ji_\senv = \{ G' \in \GG \oplus Q \oplus \Mrk \,\vert\, G' \models \exists \senv \}$
(all graphs obtained by an environment interference; $ \exists \senv$ means that there 
exists a node labeled with $\senv$).
The environment is usually modeled in a such way that it has an adverse effect on 
the satisfaction of $c$. 
\emph{Resilience} in this context means
that the system can withstand such an
adverse condition.
We ask whether we can reach a graph in $\Ii_c$
in a reasonable amount of time whenever we reach a graph in $\Ji_\senv$. 
By a ``reasonable amount of time'', we mean either that a number $k$ of steps is given in 
which $\Ii_c$ should be reached (\emph{explicit resilience}), 
or that $\Ii_c$ should be reached in a bounded number of steps (\emph{bounded resilience}).

Another approach is to consider the set $\Ji_{\neg c}= \{G' \in \GG \oplus Q \oplus \Mrk \vert G' \not\models c \}$
instead of $\Ji_\senv$. So, we ask whether we can reach a graph which satisfies $c$ in a bounded amount of 
time/in at most $k$ steps whenever we reach a graph which does not satisfy $c$, i.e., an error state. Both instances of the problem are reasonable,
and if we can give a positive answer for the latter one, we can also give a positive answer for the first one.
We focus on the first problem (adverse conditions), but the results we obtain in Sec.~\ref{sec:decidabilityResults}
abstract from a specific $ \Ji $ and therefore
also apply to the latter one (error states).

\subsection{Abstract Resilience Problems}
The previous motivation gives rise to a more abstract definition of resilience problems, namely in the framework of TSs.
Recall that, when we explicate a state set, every GTS can be interpreted~as~a~TS. 

We assume that a TS $\tuple{\St, \pfeil}$ comes along with a set of propositions each of which is either satisfied or not 
satisfied by each state of the TS. Let $\safe$ (\emph{safety} condition) and $\envi$ (\emph{bad} condition) be propositions.
Note that $\envi$ is not necessarily equivalent to $\neg \safe$.   
We ask whether we can reach a state which satisfies $\safe$ in a reasonable amount of time whenever we reach a state which 
satisfies $\envi$. From this we formulate two resilience problems. 
First consider the case where the recovery time is bound by a natural number $k \ge 0$, i.e.,\
the \emph{(abstract) explicit resilience problem}.
\begin{problem}[lined]{Explicit Resilience Problem}
	Given: & A state $s$ of a TS $ \tuple{\St,\pfeil} $, propositions $\safe$ and $\envi$, a natural number $k \ge 0$.\\
	Question: & $ \forall s'\in\St: (s'\models \envi \land s \pfeil^* s') \Rightarrow \exists s''\in \St : s' \pfeil^{\le k} s'' \land  s'' \models \safe$ ?
\end{problem}
If we assume that the transition system yields infinite sequences of transitions, we can express the property to be evaluated in CTL by
$s \models \opAG (\envi \impl \bigvee_{0 \le j \le k} \opEX^j \safe )$.
We can also ask whether there exists such a bound $k$. We call this problem the \emph{(abstract) bounded resilience problem}. 
\begin{problem}[lined]{Bounded Resilience Problem}
	Given: & A state $s$ of a TS $ \tuple{\St,\pfeil} $, propositions $\safe$ and $\envi$.\\
	Question: & $\exists k \ge 0\  \forall s'\in\St: (s'\models \envi \land s \pfeil^* s') \Rightarrow \exists s''\in \St : s' \pfeil^{\le k} s'' \land  s'' \models \safe$ ?
\end{problem}
Both problems are undecidable: For $\safe=\false$, resilience is equivalent to reachability of $\envi$. 
\section{Decidability Results}\label{sec:decidabilityResults}
Many interesting decidability results can be obtained if we assume that 
a transition system is well-structured \cite{AbdullaCJT96,FinkelS01,Koenig17}. 
We formulate the resilience problems from the previous section for WSTSs
and show decidability of both, the explicit and the bounded resilience problem,
in the setting of SWSTSs.
\subsection{Resilience Problems in a Well-structured Framework}
Properties in well-structured transition systems are often given as upward- or downward closed sets~\mbox{\cite{AbdullaCJT96,FinkelS01}}. 
Ideals enjoy suitable features for verification such as finite representation and stability, and anti-ideals are their complements (cp. Sec.~\ref{subs:wsts}). 
Transfering the abstract resilience problems into this framework, it is therefore reasonable to
demand that both propositions, $ \safe $ and $ \envi $, are given by ideals or anti-ideals. 
For our purpose, the following setting suits very well: we assume that 
the safety property is given by an ideal and the bad condition by a decidable anti-ideal. 

From these considerations, we formulate ``instances'' of the abstract resilience
problems for well-struc\-tured transition systems. 
Again, we first consider the case where the recovery time is bounded by a $ k\in\mathbb{N} $,
the \emph{explicit resilience problem for WSTSs}.
\begin{problem}[lined]{Explicit Resilience Problem for WSTSs}
	Given: & A state $s$ of a WSTS $ \tuple{\St,\wqo,\pfeil} $, a basis of $\clo{\post^*(s)}$, an ideal $\Ii$ with a given basis, a decidable anti-ideal~$\Ji$, a natural number $k \ge 0$.\\
	Question: & $ \forall s'\in\Ji:  (s \pfeil^* s') \Rightarrow \exists s''\in \Ii : s' \pfeil^{\le k} s'' $ ?
\end{problem}
Analogously, we formulate the 
\emph{bounded resilience problem for WSTSs}.
\begin{problem}[lined]{Bounded Resilience Problem for WSTSs}
	Given: & A state $s$ of a WSTS $ \tuple{\St,\wqo,\pfeil} $, a basis of $\clo{\post^*(s)}$, an ideal $\Ii$ with a given basis, a decidable anti-ideal~$\Ji$.\\
	Question: & $\exists k \ge 0\  \forall s'\in\Ji: (s \pfeil^* s') \Rightarrow \exists s''\in \Ii : s' \pfeil^{\le k} s'' $ ?
\end{problem} 
From now on, we mean one of the previously defined resilience problems for WSTSs if we speak of a resilience problem. 
If the answer of the bounded (explicit) resilience problem is positive, we say that $\tuple{\St,\wqo,\pfeil}$ is \emph{resilient} \emph{(k-step resilient)} w.r.t.\ $\Ii$ and $\Ji$ starting from $s$. In this context, $s$ is a \emph{start state}.

\begin{rem} 
The premise that a basis of $\clo{\post^*(s)}$ is given is a strong but reasonable assumption. In general, 
we cannot simply compute  
the sequence of ideals $P_k= \bigcup_{0 \le j \le k } \clo{\post^j (s)}$ until it becomes stationary. This sequence does become
stationary by Lemma~\ref{Noether}. However, in contrast to the case in Lemma~\ref{Abdulla},
$P_{k+1}=P_k$ is not a sufficient stop condition. So, this way it is not algorithmically checkable when
we have reached~$k_0$ s.t.\ $P_\ell =P_{k_0}$ for every $\ell \ge k_0$.
However, 
we investigate resilience of GTSs each of which constitutes a SWSTS. 
A sufficient condition for strong well-structuredness is boundedness of the path length (cp.\ Lemma~\ref{Koenig}). This holds, e.g.,\ 
for graph classes where the ``topology'' is static. For these graph classes, a basis of all successors is often easier to determine than in general.
A typical example for such GTSs are Petri nets, where such a basis is computable (Sec.~\ref{sec:PN}).
In Sec.~\ref{sec:approximations}, we drop the assumption, and show that we can still approximate a basis of $\clo{\post^*(s)}$
to achieve approximation results for resilience.
\end{rem}
\subsection{Decidability}\label{sec:decidability}
Abdulla et al.\ show in \cite{AbdullaCJT96} that ideal reachability is decidable for SWSTSs (cp.\ Lemma~\ref{Abdulla}).
In \cite{FinkelS01},
\mbox{Finkel \& Schnoebelen} show that ideal reachability (or \emph{coverability}) is 
also decidable for WSTSs. 
Both algorithms coincide in the case of strong well-structuredness.
K{\"o}nig \& St{\"u}ckrath \cite{Koenig17} use the algorithm
of \cite{FinkelS01} for the \emph{backwards analysis} for (generalized) well-structured GTSs. 

The main difference between the algorithms in \cite{AbdullaCJT96} and \cite{FinkelS01} is that
for (not necessarily strongly) WSTSs, $\pre( \Ii')$  in general, for any ideal $\Ii'$,
is not an ideal. Thus, Finkel \& \mbox{Schnoebelen} consider in every iteration step the ideal $\clo{\pre( \Ii')}$ instead of $\pre( \Ii')$.
Now the same arguments like before hold (cp. Sec.~\ref{IdReach}) and a basis of $\pre^* (\Ii) = \clo{\pre^* (\Ii)}$ for a given ideal $\Ii$
can be computed.
 
We are interested in the exact number of steps which we need to reach an ideal. Thus, $\pre( \Ii')$ should be an ideal and we cannot use the technique from \cite{FinkelS01} for WSTSs.
We need to restrict our setting to \emph{strongly} WSTSs like in \cite{AbdullaCJT96}. 
First, we state our main result for SWSTSs, the decidability of resilience.
\begin{thm}[decidability of resilience] \label{mainthm}The explicit and the bounded resilience problem both are decidable
for strongly well-structured transition systems.
\end{thm}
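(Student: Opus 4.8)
The plan is to reduce both resilience problems to the ideal-reachability machinery of Abdulla et al.\ (Lemma~\ref{Abdulla}). The central observation is that the inner condition $\exists s'' \in \Ii : s' \pfeil^{\le k} s''$ is equivalent to $s' \in \Ii^k = \bigcup_{0 \le j \le k}\pre^j(\Ii)$. Since we work in a \emph{strongly} well-structured system, Fact~\ref{stabideal} guarantees that $\pre(\Ii)$ is again an ideal, so each $\Ii^k$ is an ideal, and the construction in the proof of Lemma~\ref{Abdulla} lets us compute a basis of $\Ii^{k+1}$ from a basis of $\Ii^k$. Starting from the given basis of $\Ii=\Ii^0$, we can therefore compute a basis of $\Ii^k$ for any fixed $k$.

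Next I would reformulate the resilience condition in terms of the supplied basis of $\clo{\post^*(s)}$. The explicit condition fails exactly when there is a reachable bad state that cannot recover in time, i.e.\ some $s' \in \post^*(s)$ with $s' \in \Ji$ and $s' \notin \Ii^k$. Collecting these into $D = \Ji \setminus \Ii^k$, we note that $D$ is downward-closed, as $\Ji$ is an anti-ideal and $\Ii^k$ is an ideal. The crucial point is that the elements of a basis $\bI$ of $\clo{\post^*(s)}$ are minimal elements of $\post^*(s)$, so $\bI \subseteq \post^*(s)$, and a downward-closed set meets $\post^*(s)$ iff it meets $\bI$: if some reachable $s' \in D$, pick $b \in \bI$ with $b \wqo s'$, and downward-closedness gives $b \in D \cap \post^*(s)$. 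Consequently, $k$-step resilience holds iff for every $b \in \bI$ we have $b \in \Ji \Rightarrow b \in \Ii^k$. Both tests are decidable: $b \in \Ji$ because $\Ji$ is a decidable anti-ideal, and $b \in \Ii^k$ by comparing $b$ against the computed basis of $\Ii^k$ via the decidable wqo. This settles the explicit problem.

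For the bounded problem I would exploit monotonicity in $k$. The ideals form an ascending chain $\Ii^0 \subseteq \Ii^1 \subseteq \cdots$, so the tests $b \in \Ii^k$ can only become satisfied as $k$ grows; hence if $k$-step resilience holds for some $k$, it holds for all larger $k$. By Lemma~\ref{Noether} and the stop condition (Fact~\ref{stopcond}) the chain stabilizes at $\pre^*(\Ii) = \Ii^{k(\Ii)}$, and stabilization is recognized by the decidable test $\Ii^k = \Ii^{k+1}$. Therefore bounded resilience holds iff it holds at $k=k(\Ii)$, i.e.\ iff for every $b \in \bI$, $b \in \Ji \Rightarrow b \in \pre^*(\Ii)$. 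This yields one uniform procedure solving both problems: iteratively compute bases of $\Ii^0, \Ii^1, \dots$; at each stage check the finite condition above and return the first $k$ at which it succeeds (the minimal recovery bound, which also decides any explicit instance); if the chain stabilizes before the condition ever holds, return $\false$.

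I expect the computational engine to be routine, since it is exactly the coverability construction already available through Lemma~\ref{Abdulla} and Fact~\ref{stopcond}. The main obstacle, and the step deserving the most care, is the reduction from the quantification over the infinitely many reachable states in $\Ji$ to the finitely many checks on the basis $\bI$. This argument hinges on $D$ being downward-closed, which in turn requires $\Ii^k$ to be upward-closed; that property is precisely what \emph{strong} compatibility buys us (via Fact~\ref{stabideal}), and it explains why the non-strong backward-analysis technique using $\clo{\pre(\cdot)}$, which blurs the exact step count, cannot be used here.
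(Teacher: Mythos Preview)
Your proposal is correct and follows essentially the same route as the paper: the paper packages your ``$D=\Ji\setminus\Ii^k$ is downward-closed'' observation into Lemma~\ref{intersection} (applied once with $A=\post^*(s)$ and once with $A=B_{\post}$ to obtain $\post^*(s)\cap\Ji\subseteq\Ii^k\Leftrightarrow B_{\post}\cap\Ji\subseteq\Ii^k$), and the iterative procedure you describe is exactly the paper's algorithm \textsc{MinimalStep}.

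One minor slip to fix: you assert that a basis $\bI$ of $\clo{\post^*(s)}$ consists of minimal elements of $\post^*(s)$ and hence $\bI\subseteq\post^*(s)$. Since the wqo need not be antisymmetric, this can fail in general. Fortunately your equivalence does not rely on it: for the direction you left implicit, if $b\in\bI\cap D$ then $b\in\clo{\post^*(s)}$ gives some $a\in\post^*(s)$ with $a\wqo b$, and downward-closedness of $D$ yields $a\in\post^*(s)\cap D$. So the argument stands once that sentence is rephrased (or simply drop the claim and work, as the paper does, with $\clo{\post^*(s)}$ throughout).
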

We prove this theorem by giving a respective algorithm. It exploits a modified version of the ideal reachability algorithm in~\cite{AbdullaCJT96} (cp.\ Lemma~\ref{Abdulla}). We check in every iteration step inclusion in $\Ii^k = \bigcup_{0 \le j \le k} \pre^j (\Ii)$. Before doing so, we need a finite representation of $\post^*(s) \cap \Ji$ to check the inclusion in an ideal $\Ii'$. The next lemma uses that $\Ji$ and $\Ii'$ are downward- and upward-closed, respectively.

\begin{lem}[intersection with anti-ideal]\label{intersection} Let $A \subseteq \St$ be a set, $\Ji \subseteq \St$ an anti-ideal and $\Ii' \subseteq \St$ an ideal. 
Then $A \cap \Ji \subseteq \Ii' \Leftrightarrow (\clo{A}) \cap \Ji \subseteq \Ii' $.
\end{lem} 

This lemma enables us to prove Thm.~\ref{mainthm} given above. 
We iteratively determine the minimal $k$ satisfying $\post^*(s) \cap \Ji \subseteq \Ii^k$
(or stop, if there does not exist such $k$).

\begin{proof}[Proof of Theorem \ref{mainthm}] Let $B_\post$ be a basis of $\clo{\post^*(s)}$, 
$B_0$ a basis of $\Ii$, and $\Ji$ a decidable anti-ideal. 
For every~$k\ge0$, $\Ii^k$ is an ideal due to strong compatibility.
By applying Lemma~\ref{intersection} twice, we obtain 
\begin{equation*}
\post^*(s) \cap \Ji \subseteq \Ii^k \hspace*{2mm} \Leftrightarrow 
\hspace*{2mm} B_\post \cap \Ji \subseteq \Ii^k 
\end{equation*} 
for any $k\ge 0$. 
Since $B_\post$ is finite and $\Ji$ is a decidable anti-ideal, we can directly compute $B_\post \cap \Ji$. 
We perform a modification of the ideal reachability algorithm: Iteratively check whether \mbox{$B_\post \cap \Ji \subseteq \Ii^k$}. If this is the case, return $k_\minm=k$. 
Otherwise check whether $\Ii^{k+1}= \Ii^k$. If so, return $-1$ ($ \false $), otherwise continue.  
We have to make sure that every iteration step is decidable. In fact, we can compute
a basis of $\Ii^{k+1}$ if we have a basis of $\Ii^k$. This follows by the proof of Lemma~\ref{Abdulla}.
The stop condition is decidable and by Fact~\ref{stopcond} also sufficient. Soundness and completeness follow by the previous 
considerations and the fact~that 
\begin{equation*}
\post^*(s) \cap \Ji \subseteq \Ii^k \hspace*{2mm} \Leftrightarrow 
\hspace*{2mm}( \forall s'\in\Ji : (s \pfeil^* s') \Rightarrow \exists s''\in \Ii: s' \pfeil^{\le k} s''  )
\end{equation*}  
for any $k\ge 0$.
Termination is guaranteed by Lemma~\ref{Noether}.
 
To sum up, our algorithm 
decides whether there exists a $ k\geq 0 $ s.t. $\post^*(s) \cap \Ji \subseteq \Ii^k$,
and returns the minimal such $ k $ in the positive case.
Thus, it decides the bounded resilience problem. 
Given any $k$, we can check whether $k_\minm \le k$ and 
therefore decide the explicit resilience problem. 
\end{proof}

We denote the above described algorithm deciding resilience by \textsc{MinimalStep}($B_\post , \Ji, B_0$) and the used procedure returning a basis of $\pre(\clo{B'})$ by \textsc{PreBasis}($B'$). 
It is shown in \cite{Koenig17}, that such a prebasis is computable for GTSs,
and described in detail in \cite{Stueckrath16}.
The method \textsc{Min}$(B')$ minimizes a finite set $B'$
by deleting every element in $ B' $ for which there is already a smaller element in $ B' $.
\begin{algorithm}[!htb]\caption{Minimal $k$ Algorithm} \label{minstep2}
	\begin{algorithmic}[1]
		\Procedure{MinimalStep}{$B_\post , \Ji, B_0$}
		\Comment{$k_\minm$ (minimal upper bound for recovery time)/$-1$}
		\State $B \gets B_\post \cap \Ji $		
		\Comment{compute $B$ by taking out elements which are not in $\Ji$}
		\State $k \gets 0$ \Comment{increasing counter}
		\State $B_1 \gets B_0$  \Comment{basis of the current $\Ii^k$; $B_0$ is a given basis of $\Ii$}
		\State $B_2 \gets \emptyset$ \Comment{basis of the current $\Ii^{k+1}$}
		\While{true}
		\If{$B \subseteq \clo{B_1}$}
		\State \textbf{return} $ k $ \Comment{we found $k_\minm$}
		\Else 
		\State $B_2 \gets B_0 \cup  \textsc{PreBasis}(B_1 )$			\Comment{\textsc{PreBasis}$(B_1)$ computes the basis of $\clo{B_1}$}
		\State $B_2 \gets \textsc{Min}(B_2 )$         \Comment{\textsc{Min}$(B_2 )$ minimizes the set $B_2$}
		\If{$B_2 \subseteq \clo{B_1}$}
		\State \textbf{return} $ -1 $ \Comment{there exists no such $k$}
		\Else
		\State $B_1 \gets B_2$ \Comment{continue}
		\State $k \gets k +1$	
		\EndIf
		\EndIf 
		\EndWhile
		\EndProcedure \end{algorithmic}
\end{algorithm} 
 
In the proof of Thm.~\ref{mainthm}, it was crucial that we have \emph{strong} compatibility. 
This approach does not work for WSTSs in general. We loose precision when we only demand compatibility.
Thus, we conjecture that both resilience problems are undecidable for WSTSs in general, but this question remains still open.

\section{Application to Graph Transformation Systems}\label{sec:appl}

We apply the abstract results of the previous section to (joint) graph transformation systems and present a framework for verifying resilience of GTSs.
We exemplarily show how Petri nets fit in this setting and give also an example beyond Petri nets.

We considered ideals as safety, and decidable anti-ideals as ``bad'' conditions. In the setting of well-structured GTSs w.r.t.\ the subgraph order, these can be expressed as positive and negative constraints. Recall that, for a fixed class $\GG$ of graphs, $\Ii_c = \{G \in \GG \,\vert\, G \models c\}$ for a positive constraint $c$, and \mbox{$\Ji_{c'} = \{G \in \GG \,\vert\, G \models c' \}$} for a negative constraint $c'$.

\begin{fact}[ideals of graphs] Let $\GG_\ell$ be a class of graphs of bounded path length. Let $\Ii,\Ji \subseteq \GG_\ell$ be sets. 
\begin{enumerate}[(i)] 
	\item $\Ii $ is an ideal $\Leftrightarrow$ $\Ii=\Ii_c$ for a positive constraint $c$.
 	\item $\Ji $ is a decidable anti-ideal $\Leftrightarrow$ $\Ji=\Ji_c$ for a negative constraint $c$. 
 \end{enumerate} 
\end{fact}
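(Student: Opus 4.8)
The plan is to reduce both equivalences to a single observation --- that an upward-closed subset of $\GG_\ell$ is exactly the satisfaction set of a finite disjunction of existential constraints --- and to obtain the anti-ideal case by complementation. First I would dispatch the ``$\Leftarrow$'' directions of (i) and (ii). If $\Ii=\Ii_c$ for a positive constraint $c$, then upward-closedness of $\Ii$ is immediate from the upward-inheritance fact stated earlier (a supergraph still satisfies a positive constraint); dually, if $\Ji=\Ji_c$ for a negative $c$, downward inheritance gives that $\Ji$ is downward-closed. Decidability of $\Ji_c$ follows since $G\models\neg\exists G'$ just asserts the absence of a total injective morphism $G'\injto G$, which is decidable as both graphs are finite, and Boolean connectives preserve decidability.

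The core is the ``$\Rightarrow$'' direction of (i). Let $\Ii\subseteq\GG_\ell$ be an ideal. Because the subgraph order restricted to a class of bounded path length is a wqo (Ex.~\ref{ex:WQOsOnGraphs}), Fact~\ref{fbasis}(i) supplies a \emph{finite} basis $\bI=\{B_1,\dots,B_n\}\subseteq\Ii\subseteq\GG_\ell$ with $\Ii=\clo{\bI}$. Now I would use that $G\models\exists B_i$ holds precisely when $B_i\wqo G$, i.e.\ when there is a total injective morphism $B_i\injto G$. Hence for every $G\in\GG_\ell$ one has $G\in\Ii$ iff $B_i\wqo G$ for some $i$ iff $G\models\bigvee_{i=1}^{n}\exists B_i$, so $\Ii=\Ii_c$ with $c:=\bigvee_{i=1}^{n}\exists B_i$.

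For the ``$\Rightarrow$'' direction of (ii) I would pass to the complement. Given an anti-ideal $\Ji\subseteq\GG_\ell$ (decidability is not even needed here), the set $\Ii:=\GG_\ell\setminus\Ji$ is upward-closed: if $G\in\Ii$, $G\wqo H\in\GG_\ell$, and $H\in\Ji$, then downward-closedness of $\Ji$ would force $G\in\Ji$, a contradiction. By the case just proved, $\Ii=\Ii_{c'}$ for a positive $c'$, whence $\Ji=\{G\in\GG_\ell\mid G\models\neg c'\}$. Invoking the remark that the negation of a positive constraint is equivalent to a negative one, I replace $\neg c'$ by an equivalent negative constraint $c$ and conclude $\Ji=\Ji_c$.

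The argument is mostly bookkeeping once the finite-basis fact is available, so the only genuine care points are the two places where the bounded-path-length hypothesis is used. The first is the legitimacy of Fact~\ref{fbasis}, which rests on $\GG_\ell$ being well-quasi-ordered by the subgraph order (Ex.~\ref{ex:WQOsOnGraphs}); without the path-length bound this fails. The second, and the one I expect to need explicit handling, is the degenerate case $\Ii=\emptyset$ (empty basis), where the empty disjunction is not literally a positive constraint: here I would exhibit an unsatisfiable atom, taking $c:=\exists P$ for a simple path $P$ of length $\ell+1$, since no graph in $\GG_\ell$ can contain $P$ as a subgraph, so that $\Ii_c=\emptyset$ as required.
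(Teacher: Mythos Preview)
The paper states this result as a \emph{Fact} without supplying a proof, so there is no argument in the paper to compare against. Your proposal is correct and is essentially the argument one would expect: finite bases of ideals (guaranteed by the wqo property of the subgraph order on $\GG_\ell$) give the disjunction of $\exists B_i$'s for (i), and (ii) follows by complementation together with the remark that $\neg c'$ is equivalent to a negative constraint. Your explicit handling of the empty ideal via an unsatisfiable atom $\exists P$ with $P$ a path of length $\ell+1$ is a nice touch that the paper does not discuss; note that this tacitly uses $\Lambda\neq\emptyset$, but that is harmless in the paper's setting.
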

Thus, for GTSs, our safety conditions are equivalent to positive constraints and bad conditions are equivalent to negative constraints. 
\begin{rem} More general graph constraints, e.g.,\ 
		$\forall (  
		\begin{tikzpicture}[baseline=(1)]
			\node[circle,inner sep=0pt,minimum size=5pt,draw=black,fill=white] (1)  at (0,0) {}; 
			\node at (0,0.2) {\scriptsize $1$};
		\end{tikzpicture}, 
		\exists ( 
		\begin{tikzpicture}[baseline=(1)]
			\clip(-0.15,-0.1) rectangle (0.3,0.3);
			\node[circle,inner sep=0pt,minimum size=5pt,draw=black,fill=white] (1)  at (0,0) {}; 
			\draw[-to] (1) edge [loop, min distance=3mm, out=60, in=120, looseness=6](1);  
			\node at (0.2,0) {\scriptsize $1$};
		\end{tikzpicture} ) )$, 
do not constitute ideals w.r.t.\ the subgraph order. The relation 
		$\begin{tikzpicture}[baseline=(1)]
		\clip(-0.15,-0.1) rectangle (0.3,0.3);
			\node[circle,inner sep=0pt,minimum size=5pt,draw=black,fill=white] (1)  at (0,0) {}; 
			\draw[-to] (1) edge [loop, min distance=3mm, out=60, in=120, looseness=6](1);  
			\node at (0.2,0) {\scriptsize $1$};
		\end{tikzpicture} 
		\wqo 
		\begin{tikzpicture}[baseline=(1)]
		\clip(-0.15,-0.1) rectangle (0.6,0.3);
			\node[circle,inner sep=0pt,minimum size=5pt,draw=black,fill=white] (1)  at (0,0) {}; 
			\draw[-to] (1) edge [loop, min distance=3mm, out=60, in=120, looseness=6](1);  
			\node at (0.2,0) {\scriptsize $1$}; \node[circle,inner sep=0pt,minimum size=5pt,draw=black,fill=white] at (0.5,0) {}; 
		\end{tikzpicture}$ 
shows that upward-closedness is not guarenteed. 
In special cases, (nested) graph constraints \cite{Rensink04,HabelP09} may yield ideals, e.g., the ideal in the later discussed Ex.~\ref{pathgame} can be expressed as $\forall (\Lnest, \exists (\LtotoLnest))$.
However, we conjecture that a generalization to more arbitrary (nested) graph contraints is not possible.
\end{rem}
\subsection{Verifying Resilience of Graph Transformation Systems}
Using the sufficient conditions for strong well-structuredness of K{\"o}nig \& St{\"u}ckrath \cite{Koenig17}, we obtain the decidability of both resilience 
problems for a subclass of GTSs. We need to use the subgraph order as wqo. Thus, we have the restriction of bounded4 path length for the considered
graph class. Instead of considering GTSs, we consider graph transition systems, i.e.,\ we always explicate the state~set.
Thm.~\ref{mainthm} and the result in \cite{Koenig17} (see Lemma~\ref{Koenig}) imply our main result for GTSs: 
\begin{thm}[decidability of resilience for well-structured GTSs] \label{mainGTS}
The explicit and the bounded resilience problem are decidable for graph transition systems which are of bounded path length (and equipped with the subgraph order).
\end{thm}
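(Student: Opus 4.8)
The plan is to derive Theorem~\ref{mainGTS} as a specialization of the abstract result Theorem~\ref{mainthm}, using Lemma~\ref{Koenig} as the bridge between graph transition systems and strongly well-structured transition systems. Concretely, I would fix a graph transition system $\tuple{\GG_\ell,\dder_\R}$ of bounded path length, equip it with the subgraph order $\wqo$, and verify that every ingredient required to pose the explicit and bounded resilience problems for WSTSs is available; the decision procedure is then nothing but Algorithm~\ref{minstep2} (the proof of Theorem~\ref{mainthm}) applied to these ingredients.

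First I would establish well-structuredness. By Lemma~\ref{Koenig}, $\tuple{\GG_\ell,\wqo,\dder_\R}$ is a \emph{strongly} well-structured transition system; this is where bounded path length (and hence the subgraph order, rather than the minor order) is essential, since strong compatibility---needed throughout the proof of Theorem~\ref{mainthm}---is exactly what the SPO approach delivers when the same rule is applied to a larger graph (cf.\ the discussion after Def.~\ref{wellstr}). The subgraph order is a decidable wqo on $\GG_\ell$: well-quasi-orderedness is Example~\ref{ex:WQOsOnGraphs}(i), and deciding $G\wqo H$ reduces to checking existence of a total injective morphism between finite graphs. Condition~(ii) of Def.~\ref{wellstr}, computability of a basis of $\clo{\pre(\clo{\{s\}})}$ (the \textsc{PreBasis} routine), is part of being an SWSTS and is provided explicitly for GTSs in \cite{Koenig17,Stueckrath16}.

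Next I would translate the two conditions into the ideal/anti-ideal format demanded by the WSTS problems. Taking the safety condition as a positive constraint $c$, the set $\Ii_c$ is an ideal by the characterization of ideals of graphs; a basis is computable from $c$ by structural induction---atoms $\exists G$ generate $\clo{\{G\}}$, while $\lor$ and $\land$ yield union and intersection, which remain ideals by Fact~\ref{stabideal}, and Fact~\ref{fbasis} extracts a finite basis from any finite generating set. Taking the bad condition as a negative constraint $c'$, the set $\Ji_{c'}$ is a decidable anti-ideal, decidability of $G\models c'$ again reducing to subgraph checking. The remaining input, a basis of $\clo{\post^*(s)}$, is supplied as part of the problem instance exactly as in the WSTS formulation. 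With all hypotheses met, Theorem~\ref{mainthm} decides both resilience problems, which proves the claim.

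I expect the only genuinely delicate point to be the computation of a basis of the safety ideal from the constraint $c$, specifically the conjunctive case, where one must compute a basis of an \emph{intersection} of ideals---this requires forming minimal common upper bounds of basis elements rather than simply unioning bases. Everything else is a direct verification of the WSTS hypotheses; in particular, it is worth re-emphasizing that one must invoke the \emph{strong} compatibility from Lemma~\ref{Koenig}, since the weaker compatibility available under the minor order (lossy GTSs) would invalidate the step in the proof of Theorem~\ref{mainthm} that relies on $\pre(\Ii')$ being an ideal.
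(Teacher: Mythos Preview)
Your proposal is correct and follows exactly the paper's approach: the paper derives Theorem~\ref{mainGTS} in one sentence by combining Lemma~\ref{Koenig} (bounded path length graph transition systems are SWSTSs) with Theorem~\ref{mainthm} (resilience is decidable for SWSTSs). Your additional discussion of how the safety/bad conditions become ideals and anti-ideals, and of the \textsc{PreBasis} routine from \cite{Koenig17}, merely spells out what the paper leaves implicit or treats in the surrounding text.
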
 

As joint GTSs are also GTSs, the same sufficient conditions for strong well-structuredness apply.
\begin{fact}[strongly well-structured joint GTSs] \label{WSJGTS} Every annotated joint graph transition system which is of bounded path length is strongly well-structured (equipped with the subgraph order).
\end{fact}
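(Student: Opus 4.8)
The plan is to recognize that Fact~\ref{WSJGTS} is an immediate instance of Lemma~\ref{Koenig}: an annotated joint graph transition system is, by construction, nothing but an ordinary graph transition system, so strong well-structuredness follows directly from the result of K\"onig \& St\"uckrath once bounded path length is assumed. The only work is to check that the annotated joint construction does not leave the class of GTSs treated in Lemma~\ref{Koenig}, i.e.\ that it is a finite rule set over a finite label alphabet acting on genuine graphs.

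First I would verify finiteness. The annotated joint GTS $(\Sys_A \cup \Env_A)' = \Sys_A' \cup \Env_A'$ is built from the finite rule sets $\Sys$ and $\Env$, the finite state set $Q$ with its finite transition relation $\delta \subseteq Q \times Q$, and the finite marker set $\Mrk = \{\top, \ssys, \senv\}$. Each marked rule $\tuple{L,q,m} \pmor \tuple{R,q',m'}$ is determined by a choice of a rule in $\Sys \cup \Env$, a transition $\tuple{q,q'} \in \delta$, and markers $m,m' \in \Mrk$; hence $\Sys_A'$ and $\Env_A'$ are finite sets of partial morphisms. Moreover, each state $\tuple{G,q,m}$ is an ordinary graph over the extended, but still finite, label alphabet $\Lambda \cup Q \cup \Mrk$ (with $Q$ disjoint from $\Lambda$ by the definition of control automaton, and the markers treated as fresh labels). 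Thus $(\Sys_A \cup \Env_A)'$ is a finite set of graph transformation rules, i.e.\ a GTS, and its associated transition system $\tuple{\GG', \trafo_{(\Sys_A \cup \Env_A)'}}$, with $\GG'$ of the form $\GG \oplus Q \oplus \Mrk$ and closed under rule application, is precisely a graph transition system in the sense of the corresponding definition.

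Having established this, I would simply invoke Lemma~\ref{Koenig}: since the annotated joint graph transition system is assumed to be of bounded path length, it is strongly well-structured when equipped with the subgraph order. I would additionally remark that the state node and the marker node are attached as a disjoint union of isolated nodes, so $\ell(\tuple{G,q,m}) = \ell(G)$; hence bounded path length of the underlying graph class $\GG$ already entails bounded path length of $\GG'$, which makes the hypothesis natural. There is no real obstacle here: the entire content is the observation that enriching graphs with a state node and a marker node keeps everything inside the finite-alphabet GTS framework of Lemma~\ref{Koenig}, so that lemma applies verbatim.
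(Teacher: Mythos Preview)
Your proposal is correct and matches the paper's own treatment: the paper provides no separate proof for this fact, but introduces it with the sentence ``As joint GTSs are also GTSs, the same sufficient conditions for strong well-structuredness apply,'' which is exactly the reduction to Lemma~\ref{Koenig} you spell out. Your additional checks (finiteness of the rule set, finiteness of the extended label alphabet, and the observation that the disjoint-union construction preserves path length) make explicit what the paper leaves implicit.
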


An immediate consequence of Thm.~\ref{mainGTS} and Fact~\ref{WSJGTS} is the following:
\begin{cor}[decidability of resilience for joint GTSs]
The explicit and the bounded resilience problem are decidable for annotated joint graph transition systems which are of bounded path length (and equipped with the subgraph order).
\end{cor}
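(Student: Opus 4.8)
The plan is to recognize that the corollary follows by directly specializing the already-established results, so the proof amounts to checking that the hypotheses of Thm.~\ref{mainGTS} (equivalently, of Fact~\ref{WSJGTS} together with Thm.~\ref{mainthm}) are met. By definition, an annotated joint graph transition system is the graph transition system $\tuple{\GG', \trafo_{(\Sys_A \cup \Env_A)'}}$ obtained from an annotated joint GTS $(\Sys_A \cup \Env_A)'$ together with a state class $\GG'$ of the form $\GG \oplus Q \oplus \Mrk$ that is closed under rule application. In particular, it is a graph transition system in the precise sense demanded by Thm.~\ref{mainGTS}.

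First I would invoke the standing hypothesis that the system is of bounded path length. Thm.~\ref{mainGTS} states that both the explicit and the bounded resilience problem are decidable for graph transition systems of bounded path length equipped with the subgraph order. Since annotated joint graph transition systems form a subclass of graph transition systems, the decidability transfers verbatim, and the corollary is immediate. Equivalently, one may argue in two steps: Fact~\ref{WSJGTS} yields that such a system is strongly well-structured (with the subgraph order), whence the abstract decidability result Thm.~\ref{mainthm} applies to the resulting SWSTS. I would present whichever of these two routes reads most cleanly; they deliver the same conclusion.

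I expect no genuine obstacle, since the statement is an \emph{instantiation} rather than a new argument. The only point worth making explicit is that the inputs required by the resilience problems remain available in this setting: the safety condition is given by an ideal $\Ii_c$ for a positive constraint $c$, and the bad condition by a decidable anti-ideal $\Ji_{c'}$ for a negative constraint $c'$, both taken over the state class $\GG \oplus Q \oplus \Mrk$, which is itself of bounded path length. The equivalence between ideals/anti-ideals and such constraints is exactly the ``ideals of graphs'' fact, and the basis of $\clo{\post^*(s)}$ is supplied as part of the problem input. Thus all premises of Thm.~\ref{mainGTS} are satisfied, and the corollary follows.
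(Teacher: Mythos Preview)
Your proposal is correct and matches the paper's approach exactly: the paper states the corollary as ``an immediate consequence of Thm.~\ref{mainGTS} and Fact~\ref{WSJGTS}'' with no further argument, and you have spelled out precisely this instantiation (with the alternative route via Fact~\ref{WSJGTS} and Thm.~\ref{mainthm} also noted).
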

Thus, we can apply the algorithm \textsc{MinimalStep} described in Sec.~\ref{sec:decidability} to verify resilience of annotated joint graph transition systems.
We consider an ideal $\Ii_c$ for a positive constraint~$c$ with a given basis~$B_c$. 
The anti-ideal (bi-ideal)
is given by $\Ji_\senv=\{ G' \in \GG \oplus Q \oplus \Mrk \,\vert\, G' \models \exists \senv \}$.
We assume that a start graph 
$G \in \GG \oplus \{q_0 \}\oplus \{ \top \}$ 
and a basis $B_G$ of $\clo{\post^*(G)}$ are given. 
The \textsc{PreBasis} procedure for the subgraph order needed in the algorithm is given by K{\"o}nig \& St{\"u}ckrath in \cite{Koenig17} 
(and more detailed in~\cite{Stueckrath16}).
\begin{figure}[!htb]
		\begin{center} 
		\tikzset{
		 state/.style={
		        rectangle,
		        rounded corners,
		        draw=black,
		        minimum height=2em,
		        inner sep=2pt,
		        text centered,
		        },
		}
\begin{tikzpicture}[->,>=stealth',scale=0.83, transform shape]
\node[state,fill=whitegray] (box1) at (0,0.4) { \begin{tabular}{c} \hspace*{2mm} \\ \textsc{MinimalStep}$(B_G, \Ji_\senv, B_c)$  \\ \hspace*{2mm} \end{tabular}};

\draw (-5,1) -- node[above]{$(\SysEnv)'$}(-2.35,1);  
\draw (-5,0.4) -- node[above]{basis $B_G$}(-2.35,0.4);
\draw (-5,-0.2) -- node[above]{basis $B_c$}(-2.35,-0.2); 

\draw (2.35,0.4) -- node[above]{$k_\minm$/$\false$}(5,0.4); 

\end{tikzpicture} 
\end{center}
\caption{Verifying resilience in the adverse conditions approach.}
\end{figure}
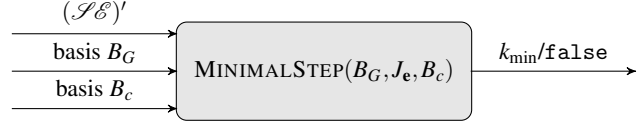

\subsection{Approximations}\label{sec:approximations}
We now drop an essential assumption for the decidability results in Sec.~\ref{sec:decidability}
by considering SWSTSs without a given basis of $\clo{\post^*(s)}$.
We show that we can still approximate $k_\minm$ from below (by $ k^\ell_\mathrm{un} $, $ \ell\in\mathbb{N} $) and above (by $ k_\mathrm{ov} $) by calculating corresponding approximations of (a basis of) $\clo{\post^*(s)}$.
The following function, called $\mu$-function, defines these approximations.
\begin{defn}[$\mu$-function, $ k^\ell_\mathrm{un} $, $ k_\mathrm{ov} $] 
	Let $\tuple{\St,\wqo,\pfeil}$ be a SWSTS, $\Ji \subseteq \St$ an anti-ideal, and $\Ii\subseteq \St$ an ideal. 
	We define the function $\mu : \Pot (\St) \to \mathbb{N} \cup \{\infty\}$ 
	as $\mu (A)= \minm(\{ k \in \mathbb{N} : A \cap \Ji \subseteq \bigcup_{j \le k}\pre^j (\Ii) \} \cup \{ \infty \})$ 
	where $\Pot(\St)$ is the power set of $\St$. 
	For $s \in \St$ and $\ell \in \mathbb{N}$, 
	let $k^\ell_\mathrm{un}:= \mu(\bigcup_{j \le \ell}\post^j (s))$ and $k_\mathrm{ov}:= \mu(\post^*(\clo{\{s\}}))$.
\end{defn}
Note that $k_\minm=\mu(\post^*(s))$ and that $k_{\minm}=\infty$ can be read as ``there is no such $k$''. 
By definition, $\mu$~is monotonic, i.e.,\ $A \subseteq B$ implies $\mu(A) \le \mu(B)$, and by Lemma~\ref{intersection}, $\mu(\clo{A})=\mu(A)$. 
For the under- and over-approximation, we consider a basis of $\clo{\bigcup_{j \le \ell}\post^j (s)}$ 
and a basis of $\clo{\post^*(\clo{\{s\}})}$, respectively. 
For every GTS of bounded path length, this under-approximation is feasible. 
We present an idea for performing the over-approximation by means of invertibility. 
\begin{fact}[weak invertibility]\label{inv}
	Let $\tuple{\GG, \Rightarrow_\R}$ be a graph transition system of bounded path length and $\R'$ a GTS 
	s.t.\ $G \Rightarrow^*_\R H$ iff $ H \Rightarrow^*_{\R'} G$ for all $G,H \in \GG$. 
	Then, for every $G \in \GG$, $\post_{\R}^* (\clo{\{G\}})=\pre_{\R'}^*(\clo{\{G\}})$ 
	and a basis of $\post_{\R}^* (\clo{\{G\}})$ is computable. 
\end{fact}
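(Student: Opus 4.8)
The plan is to treat the two assertions separately: first the set equality $\post_{\R}^*(\clo{\{G\}})=\pre_{\R'}^*(\clo{\{G\}})$, which is a purely logical consequence of the invertibility biconditional, and then the computability of a basis, which reduces to the ideal-reachability machinery of Sec.~\ref{IdReach} applied to the inverse system $\R'$.

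For the equality, I would unfold the definitions of $\post^*$, $\pre^*$, and of the upward-closure, and then chain equivalences. Fix $K \in \GG$. Since $\GG$ is closed under rule application, every graph occurring in a derivation stays in $\GG$, so the hypothesis $H\Rightarrow^*_\R K \Leftrightarrow K\Rightarrow^*_{\R'}H$ is applicable to all the intermediate graphs, and we obtain
\begin{align*}
K\in\post_{\R}^*(\clo{\{G\}})
&\iff \exists H\in\GG\colon G\wqo H \wedge H\Rightarrow^*_\R K\\
&\iff \exists H\in\GG\colon G\wqo H\wedge K\Rightarrow^*_{\R'}H\\
&\iff K\in\pre_{\R'}^*(\clo{\{G\}}).
\end{align*}
The only nontrivial step is the middle one, which is exactly the assumed weak invertibility; everything else is unfolding of definitions.

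For the computability claim, I would observe that the graph transition system $\tuple{\GG,\Rightarrow_{\R'}}$ is over the same graph class $\GG$ and is therefore again of bounded path length, so by Lemma~\ref{Koenig} it is a strongly well-structured transition system equipped with the subgraph order. The set $\clo{\{G\}}$ is an ideal, and its basis is simply the singleton $\{G\}$. Starting from this basis I would run the ideal-reachability computation underlying Lemma~\ref{Abdulla}: iteratively compute a basis of $\bigcup_{0\le j\le k}\pre_{\R'}^j(\clo{\{G\}})$ using the \textsc{PreBasis} procedure for $\R'$ (computable by bounded path length, cf.\ \cite{Koenig17,Stueckrath16}), detecting termination via the stop condition of Fact~\ref{stopcond} together with Lemma~\ref{Noether}. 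This produces a basis of $\pre_{\R'}^*(\clo{\{G\}})$, which by the first part is a basis of $\post_{\R}^*(\clo{\{G\}})$.

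The main obstacle is establishing that the $\R'$-system is itself a well-structured graph transition system to which the ideal-reachability algorithm applies: this requires that $\GG$ be closed under application of $\R'$ (so that $\tuple{\GG,\Rightarrow_{\R'}}$ is a bona fide graph transition system) and that the prebasis for $\R'$ be computable. Both follow from the bounded-path-length assumption on $\GG$ via Lemma~\ref{Koenig}. A smaller point to keep honest is that the invertibility biconditional is only hypothesized for graphs in $\GG$, so throughout the equivalence chain one must check that the intermediate graphs remain in $\GG$; this holds precisely because $\GG$ is closed under rule application.
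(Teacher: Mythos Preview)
The paper states this as a Fact without proof, so there is no detailed argument to compare against; your approach---unfold definitions for the equality, then invoke the ideal-reachability machinery of Lemma~\ref{Abdulla} on the inverse system $\tuple{\GG,\Rightarrow_{\R'}}$---is precisely the route the paper intends, and both parts of your argument are essentially sound.

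There is one genuine slip in your final paragraph. You correctly identify that the argument needs $\GG$ to be closed under application of $\R'$ (otherwise $\tuple{\GG,\Rightarrow_{\R'}}$ is not a graph transition system in the sense of the paper, and $\pre_{\R'}^*$ is ill-defined), but you then claim this closure ``follows from the bounded-path-length assumption on $\GG$ via Lemma~\ref{Koenig}.'' That is not true: Lemma~\ref{Koenig} asserts that a graph transition system of bounded path length is strongly well-structured; it says nothing about closure of $\GG$ under a \emph{different} rule set $\R'$. An arbitrary $\R'$ could easily create graphs outside $\GG$ (e.g.,\ of longer path length). Closure under $\R'$ is really an implicit side condition of the Fact---it holds in the applications the paper has in mind (inverse rules in Petri-net-like settings, where the same graph class works in both directions)---and you should state it as an assumption rather than claim to derive it. Once that assumption is made explicit, the rest of your argument goes through: bounded path length of $\GG$ gives strong well-structuredness of $\tuple{\GG,\Rightarrow_{\R'}}$ by Lemma~\ref{Koenig}, and then Lemma~\ref{Abdulla} (with the \textsc{PreBasis} procedure of~\cite{Koenig17}) computes the desired basis.
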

In particular, such an $ \R' $ exists
if $G \Rightarrow_r H$ iff $ H \Rightarrow_{r^{-1}} G$ for all $G,H \in \GG$, $r \in \R$,
where for a rule~$r=\tuple{L \pmor R}$ which is injective on its domain, $r^{-1}=\tuple{R \pmor L}$ is the \emph{inverse rule}. 
In general, $G \Rightarrow_r H$ only implies that there is a graph $G' \le G$ s.t.\ $H \Rightarrow_{r^{-1}} G'$, 
since an application of $r$ may have deleted dangling edges. 
However, in some classes of GTSs, e.g.,\ in Petri nets (see Sec.~\ref{sec:PN}), 
there are no dangling edges in both directions, and we can use the inverse rules for the over-approximation.
\begin{fact}[approximation] 
	Let $\tuple{\GG,\Rightarrow_\R}$ be a GTS of bounded path, 
	$\Ji \subseteq \GG$ an anti-ideal, $\Ii \subseteq \GG$ an ideal, and $G \in \GG$. 
	(i) For every $\ell\ge 0$, $k^\ell_{un}$ is computable and $k^\ell_{un} \le k_\minm$. 
	The sequence $ \tuple{k^\ell_{un}}_\ell $ converges to~$k_\minm$, eventually stabilizing. 
	(ii) Under the assumptions of Fact~\ref{inv}, $k_\mathrm{ov} $ is computable and $k_\mathrm{ov}\ge k_\minm$.  
\end{fact}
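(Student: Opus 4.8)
The plan is to prove the four claims of the statement by separating the easy order relations $k^\ell_{un}\le k_\minm\le k_\mathrm{ov}$ and the two computability assertions from the genuinely delicate convergence of $\tuple{k^\ell_{un}}_\ell$. I would first extract everything that follows from monotonicity of $\mu$. From $s\in\clo{\{s\}}$ and monotonicity of $\post^*$ we get the inclusions $\bigcup_{j\le\ell}\post^j(s)\subseteq\post^*(s)\subseteq\post^*(\clo{\{s\}})$, so $\mu$ monotone yields $k^\ell_{un}\le k_\minm\le k_\mathrm{ov}$ at once; the same monotonicity applied to $\bigcup_{j\le\ell}\post^j(s)\subseteq\bigcup_{j\le\ell+1}\post^j(s)$ shows $\tuple{k^\ell_{un}}_\ell$ is non-decreasing.

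For computability I would treat the two cases with their respective tools. For $k^\ell_{un}$: in a GTS each graph admits only finitely many injective matches of a given finite left-hand side, so $\post(G)$ is finite and hence $A_\ell:=\bigcup_{j\le\ell}\post^j(s)$ is a finite, explicitly computable set; a basis of $\clo{A_\ell}$ is then computable by Fact~\ref{fbasis}(ii), and feeding it (together with the given basis of $\Ii$ and $\Ji$) to \textsc{MinimalStep} returns $\mu(\clo{A_\ell})=\mu(A_\ell)=k^\ell_{un}$, exactly as in the correctness argument for \textsc{MinimalStep} (using $\mu(\clo{A})=\mu(A)$ and Lemma~\ref{intersection}). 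For $k_\mathrm{ov}$: under the hypotheses of Fact~\ref{inv}, that fact supplies a computable basis of $\post^*(\clo{\{s\}})=\pre^*_{\R'}(\clo{\{s\}})$, which is an ideal (it is the pre-star of the ideal $\clo{\{s\}}$ under $\R'$, hence upward-closed by strong compatibility, cf.\ Fact~\ref{stabideal} and Fact~\ref{stopcond}); feeding this basis to \textsc{MinimalStep} then returns $k_\mathrm{ov}=\mu(\post^*(\clo{\{s\}}))$.

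The heart of the proof is that $\tuple{k^\ell_{un}}_\ell$ converges to $k_\minm$ and eventually stabilizes. Being non-decreasing and bounded by $k_\minm$, the sequence stabilizes at some $k^*\le k_\minm$ whenever $k_\minm<\infty$; I would rule out $k^*<k_\minm$ by distributing intersection over the union $\post^*(s)=\bigcup_\ell A_\ell$, since $A_\ell\cap\Ji\subseteq\Ii^{k^*}$ for all $\ell$ forces $\post^*(s)\cap\Ji\subseteq\Ii^{k^*}$ and thus $k_\minm\le k^*$, a contradiction. For $k_\minm=\infty$ I would invoke the eventual stationarity of the ideal chain (Lemma~\ref{Noether}, Fact~\ref{stopcond}): $k_\minm=\infty$ means $\post^*(s)\cap\Ji\not\subseteq\pre^*(\Ii)=\Ii^{k(\Ii)}$, so there is a witness $s'\in\post^*(s)\cap\Ji$ lying outside every $\Ii^k$; as $s'$ is reached at some finite depth $\ell_0$, we have $s'\in(A_\ell\cap\Ji)\setminus\bigcup_k\Ii^k$ for all $\ell\ge\ell_0$, whence $k^\ell_{un}=\infty$ for $\ell\ge\ell_0$ and the sequence stabilizes at $\infty$. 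The main obstacle is precisely this last step: converting the \emph{global} non-coverability $k_\minm=\infty$ into a \emph{single finitely reachable} witness is exactly where the stabilization of $\tuple{\Ii^k}_k$ is indispensable, and it is what makes ``eventually stabilizing'' hold uniformly in both the finite and the infinite case.
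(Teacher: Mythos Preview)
Your proposal is correct and follows the approach the paper sketches: the paper states this as a Fact without a formal proof, but the text immediately preceding it records precisely the ingredients you use (monotonicity of $\mu$, the identity $\mu(\clo{A})=\mu(A)$ via Lemma~\ref{intersection}, and computation via bases of $\clo{\bigcup_{j\le\ell}\post^j(s)}$ and of $\post^*(\clo{\{s\}})$ from Fact~\ref{inv}). Your case analysis for the convergence of $\tuple{k^\ell_{un}}_\ell$, in particular the use of the stationarity of $\tuple{\Ii^k}_k$ to extract a single finitely reachable witness when $k_\minm=\infty$, supplies detail that the paper leaves implicit.
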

Note that $k^\ell_\mathrm{un}=\infty$ implies $k_\minm=\infty$, and $k_\mathrm{ov}<\infty$ implies $k_\minm<\infty$. Only if $k^\ell_\mathrm{un}=0$ and $k_\mathrm{ov}=\infty$, we gain no information about $k_\minm$.
The approximation results described above are visualized in Fig.~\ref{fig:under-and-over-approximation}.
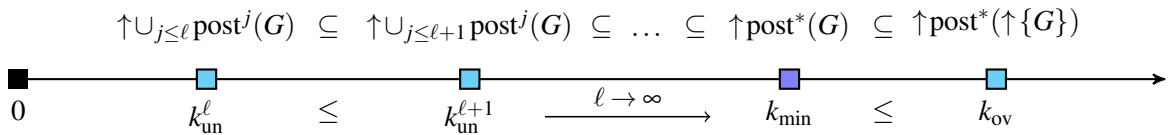
\begin{figure}[!htb] 
	\centering 
	\begin{tikzpicture}
	\tikzset{P/.style={rectangle,minimum size=0.01mm,fill=black}} 
	\draw[-stealth', thick] (0,0) node [P,label=below:{$0$}]{} 
		--(2.5,0) node [P,draw=black,fill=cyan!50,label=below:{$k^\ell_\mathrm{un}$}]{} 
		--(4.12,0) node [label=below:{$\le$}]{} 
		-- (6,0) node [P,draw=black,fill=cyan!50,label=below:{$k^{\ell+1}_\mathrm{un}$}]{}
		-- (7.75,0) node  {}
		-- (10.25,0) node [P,draw=black,fill=blue!50,label=below:{$k_\minm$}]{} 
		--(11.5,0) node [label=below:{$\le$}]{} 
		-- (13,0) node [P,draw=black,fill=cyan!50,label=below:{$k_\mathrm{ov}$}]{}  
		-- (15.25,0) {};
		\node at (2.5,0.7) 		{$\clo{\cup_{j \le \ell}\,\post^j (G)}$};
		\node at (6,0.7) 		{$\clo{\cup_{j \le \ell+1}\,\post^j (G)}$}; 
		\node at (4.12,0.7) 	{$\subseteq$}; 
		\node at (7.75,0.7)		{$\subseteq$} ;	
		\node at (8.35,0.6) 	{$\ldots$};
		\node at (9,0.7)		{$\subseteq$} ;
		\node at (10.25,0.7) 	{$\clo{\post^* (G)}$};
		\node at (11.5,0.7)		{$\subseteq$} ; 
		\node at (13, 0.75) 	{$\clo{\post^*(\clo{\{G\}})}$};
		\draw[-to,semithick] (7.0,-0.5)  --(9.2,-0.5) node[midway,above]{\small $\ell \to \infty$};
	\end{tikzpicture}
	\caption{Under- and over-approximation of $k_\minm$ by corresponding approximation of $\clo{\post^* (G)}$.}
	\label{fig:under-and-over-approximation}
\end{figure}

\subsection{An Example Class: Petri Nets}\label{sec:PN}
Petri nets \cite{Reisig85} are a common model for discrete distributed systems in computer science, often applied, e.g.,\ in logistics or supply chains \cite{Zhang09}. 
It is a classical example for strongly well-structured (graph) transition systems. We will give a definition of Petri nets and show how our example fits in this setting.

\begin{defn}[Petri nets]
	A \emph{Petri net} is a tuple $ \PN=\tuple{\pl,\tr,\fl} $
	with disjoint finite sets of \emph{places} $ \pl $ and \emph{transitions} $ \tr $,
	and a \emph{flow function} $ \fl: (\pl\!\times\!\tr)\cup (\pl\!\times\!\tr)\to \mathbb{N} $.
	A \emph{marking} in $ \PN $ is a multi-set $ \marking:\pl\to\mathbb{N} $ 
	that indicates the number of tokens on each place.
	$ \fl(x,y)=n>0 $ means there is an \emph{arc} of \emph{weight} $ n $
	from node $ x $ to $ y $ describing the flow of tokens in the net.
	A transition $ t\in\tr $ is \emph{enabled} in a marking $ \marking $
	if $ \forall p\in\pl: \fl(p,t)\leq\marking(p) $.
	If $ t $ is enabled, then $ t $ can \emph{fire} in $ \marking $,
	leading to a new marking~$ \marking' $ calculated by
	$\forall p\in\pl: \marking'(p)=\marking(p)-\fl(p,t)+\fl(t,p) $.
	This is denoted by $ \marking[t\rangle \marking' $.
	Usually, a Petri net $ \PN $ is equipped with an \emph{initial marking} $ \marking_0 $.
	The tuple $ \tuple{\PN,\marking_0} $ is then called a \emph{marked Petri net}.
\end{defn}

Any Petri net $ \PN $ can be interpreted as a transition system with
the states $ \St $ given by $ \markings(\PN) $, the set of all markings of $ \PN $,
and the transitions $ \pfeil $ given by $ \marking\pfeil\marking' \Leftrightarrow \exists t\in\tr:\marking[t\rangle \marking' $.
Together with the wqo~$ \leq_\text{PN} $, given by 
$ \forall \marking,\marking'\in \markings(\PN):
\marking\leq_\text{PN} \marking':\Leftrightarrow \forall p \in\pl:\marking(p)\leq \marking'(p)  $,
this constitutes a SWSTS.
For Petri nets, reachability and equivalent problems are decidable \cite{Reisig85,EsparzaN94}.
From this fact and the results in \cite{ValkJ85}, one can show that for Petri nets a basis of $ \clo{\post^*(\marking_0)} $ is computable:
In~\cite{ValkJ85}, it is shown that for any ideal $ \Ii $ of markings in a Petri net,
a basis of $ \Ii $ is computable iff for every \mbox{$ \omega $-marking}~$ \marking $ it is decidable whether
$ \Ii\cap\dlo{\{\marking\}}=\emptyset $. 
An $ \omega $-marking is a function $ \marking:\pl\to\mathbb{N}\cup\{\omega\} $,
and analogously to~before, $\dlo{\{\marking\}}:=\{ \marking ' \in \markings(\PN) \,\vert\, \forall p\in\pl: \marking'(p)\leq\marking(p)\lor \marking(p)=\omega \}$.
Since $ \clo{\post^*(\marking_0)} $ is an ideal, we can apply this result and ask whether 
$ \clo{\post^*(\marking_0)}\cap\dlo{\{\marking\}}=\emptyset $ is decidable.
This is obviously equivalent to $ \clo{\post^*(\marking_0)}\cap\dlo{\{\marking\}}\subseteq\emptyset $,
allowing us to apply Lemma~\ref{intersection}, since $ \emptyset $ is an ideal.
Thus, we now ask whether $$ \post^*(\marking_0)\cap\dlo{\{\marking\}}=\emptyset. $$
This problem corresponds 
to the so-called \emph{submarking reachability problem},
which is decidable~(cp.,\ e.g., \cite{EsparzaN94}), since it is recursively equivalent the to reachability problem.
Therefore, we get that a basis of~$ \clo{\post^*(\marking_0)} $ is computable. 

Petri nets can also be seen as an instance of GTSs, as shown in \cite{Baldan10}.
From that point of view, every transition corresponds to a graph transformation rule.
A marking is given by the structure of the Petri net represented as a graph,
with the number of tokens on a place represented by extra nodes connected to it, as in Fig.~\ref{fig:ruleApplication}.
The wqo $ \leq_\text{PN} $ then directly corresponds to the subgraph order.
Together with the start graph representing the initial marking,
interpreting the GTS as a WSTS results in exactly the same SWSTS above.
This means we can apply the algorithm deciding resilience in GTS to Petri nets.
We demonstrate this by the following example,
where we consider a Petri net that, when interpreted as a GTS, 
is exactly the supply chain modeled in Ex.~\ref{prodch}.

\begin{exmp}[supply chain as Petri net] \label{prodchPN}
	We consider a marked Petri net modeling a simplified scenario of a supply chain, shown in Fig.~\ref{fig:PetriNetExample}.
	As usual we depict places as circles, transitions as rectangles, and the flow as weighted directed arcs between them.
	In the example, all weights are $ 1 $ and therefore not indicated.
	Dots on places indicate the number of tokens on the respective place in the initial marking.
	\begin{figure}[!htb]
		\centering
		\begin{tikzpicture}
		\renewcommand{\xdis}{1cm}
		\renewcommand{\ydis}{0.85cm}
			\node[transition] (produce) [label={below:{\footnotesize$ \mathit{produce}$}}] {};
			\node[place,right=of produce] (product) [label={[label distance=-3pt]above:{\footnotesize$ \mathit{product}$}}]{};
			\node[transition,right=of product] (transport) [label={below:{\footnotesize$ \mathit{transport}$}}]{};
			\node[place,right=of transport,tokens=1] (warehouse)  [label={[label distance=-2pt]right:{\footnotesize$ \mathit{warehouse}$}}] {};
			\node[envtransition, above=of warehouse] (accident) [label={[label distance=-1pt]left:{\footnotesize$ \mathit{accident}$}}] {};
			\node[transition, above right=of warehouse] (ship1) [label={[label distance=-3pt]above:{\footnotesize$ \mathit{ship}_1$}}] {};
			\node[transition, below right=of warehouse] (ship2) [label={[label distance=-1pt]below:{\footnotesize$ \mathit{ship}_2$}}] {};
			\node[place, right=of ship1,tokens=1] (store1) [label={[label distance=-2.5pt]below:{\footnotesize$ \mathit{store}_1$}}]{};
			\node[place, right=of ship2,tokens=1] (store2) [label={[label distance=-3.5pt]above:{\footnotesize$ \mathit{store}_2$}}] {};
			\node[envtransition, right=of store1] (buy1) [label={[label distance=-3.5pt]above:{\footnotesize$ \mathit{buy}_1$}}]{};
			\node[envtransition, right=of store2] (buy2) [label={[label distance=-2.5pt]below:{\footnotesize$ \mathit{buy}_2$}}] {};
			
			\draw[->]
			(produce)		edge[post]	(product)
			(transport)		edge[pre]	(product)
							edge[post]	(warehouse)
			(accident)		edge[pre]	(warehouse)
			(ship1)	edge[pre]	(warehouse)
							edge[post]	(store1)
			(ship2)	edge[pre]	(warehouse)
							edge[post]	(store2)
			(buy1)			edge[pre]	(store1)
			(buy2)			edge[pre]	(store2)
			;
		\end{tikzpicture}
		\hspace{1cm}
		\begin{tikzpicture}
		\renewcommand{\xdis}{3cm}
		\renewcommand{\ydis}{0.7cm}
		\node[graphnode,fill=white,scale=2,inner sep=1pt] (init) {$ e $};
		\node[graphnode, right=of init] (d) {};
		\node[graphnode, right=of d] (dd) {};
		\node[graphnode,above=of d] (pt) {};
		\node[graphnode] at ($ (init)!0.5!(pt) $) (p) {};
		\node[graphnode,above=of dd] (ptpt) {};
		\node[graphnode] at ($ (pt)!0.5!(ptpt) $) (ptp) {};
		\node[graphnode] at ($ (d)!0.5!(dd) $)  (dp) [yshift=-2*\ydis] {};
		
		\node[left=of init, xshift=0.8*\xdis] (root) {};
		\draw[->]
		(root)	edge	(init)
		(init)	edge node[above,sloped,inner sep=1pt]{\footnotesize$\mathit{produce}$} (p)
		(p)	edge node[above,sloped,inner sep=1pt]{\footnotesize$\mathit{transport}$} (pt)
		(pt)	edge node[above,sloped,inner sep=1pt]{\footnotesize$\mathit{produce}$} (ptp)
		(ptp)	edge node[above,sloped,inner sep=1pt]{\footnotesize$\mathit{transport}$} (ptpt)
		(d)	edge node[below,sloped,inner sep=1pt]{\footnotesize$\mathit{produce}$} (dp)
		(dp)	edge node[below,sloped,inner sep=1pt]{\footnotesize$\mathit{transport}$} (dd)
		(init)	edge node[below,sloped,inner sep=1pt]{\footnotesize$\mathit{ship}_i$} (d)
		(d)	edge node[below,sloped,inner sep=1pt]{\footnotesize$\mathit{ship}_i$} (dd)
		(pt)	edge node[below,sloped,inner sep=1pt,pos=0.25]{\footnotesize$\mathit{ship}_i$} (dd)
		;
		\draw[->, rounded corners]
		(ptpt) -- ++(0,0.5cm) -- node[above,inner sep=1pt]{\footnotesize$\mathit{accident}$, $ \mathit{buy}_i $} ++(-2*\xdis,0) -- (init)
		;
		\draw[->, rounded corners]
		(dd) -- ++(0,-1cm) -- node[below,inner sep=1pt]{\footnotesize $ \mathit{buy}_i $} ++(-2*\xdis,0) -- (init)
		;
		\end{tikzpicture}
		\caption{A Petri net modeling a supply chain, and its control automaton.}
		\label{fig:PetriNetExample}
	\end{figure}
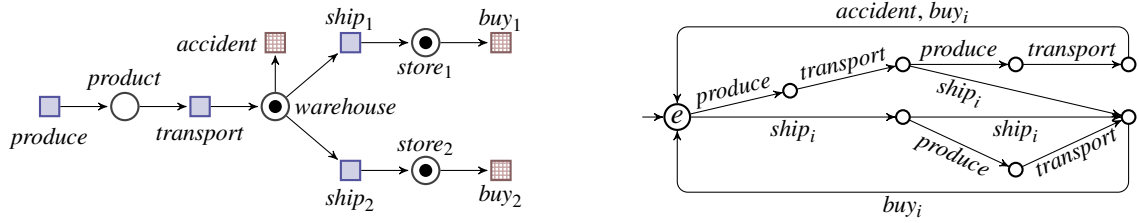

	The Petri net corresponds directly to the graph transformation rules in Ex.~\ref{prodch},
	with the blue transitions simulating $ \Sys $, and the red (checkered) transitions simulating $ \Env $.
	The initial marking represents the start graph. 
	Correspondingly, the control automaton has the same structure as in Ex.~\ref{prodch},
	with transitions replacing rules.
	Let $ \Ii=\{ \tuple{\marking,q} \,|\, \marking(\mathit{warehouse}),\marking(\mathit{store}_1),\marking(\mathit{store}_2)\geq 1 \land q\in Q \} $,
	i.e.,\ in the warehouse and in both stores products are available for shipping or purchase, respectively.
	The transitions corresponding to $ \Env $ reduce the number of tokens in the net.
	We consider the resilience problem with adverse conditions.
	By definition of the control automaton, 
	we know that $ \Ji_\senv =\{ \tuple{\marking,e} \vert \marking\text{ is a marking}\} $. 
	
We interpreted Ex. \ref{prodch}/ Ex. \ref{prodchPN} as joint GTS and 
applied a prototype implementation of the algorithm \textsc{MinimalStep} from Sec.~\ref{sec:decidability} to it. We obtained that $ k_\minm=6 $ is the smallest $ k $ for which the system is $ k $-step resilient: The following set $B^{\marking_0}_\senv$ is the intersection of a basis of $\clo{\post^* (\marking_0 )}$ with $\Ji_\senv$ where $M_0 = \vek{0,1,1,1,q_0}$. The first coordinate corresponds to (the number of tokens in) $ P $/$ \mathit{product} $, the second coordinate to $ W $/$ \mathit{warehouse} $, and the third and fourth coordinate correspond to $ S_1 $/$ \mathit{store}_1 $ and $ S_2 $/$ \mathit{store}_2 $, respectively.
\begin{align*} B^{\marking_0}_\senv = \{ &\vek{0,1,1,1}, \vek{0,5,0,0}, \vek{0,0,3,0} , \vek{0,0,0,3}, \vek{0,1,2,0},  \\ & \vek{0,1,0,2}, \vek{0,0,2,1}, \vek{0,0,1,2}, \vek{0,3,1,0}, \vek{0,3,0,1}\} \times \{q_0 \}  \end{align*}
We computed $B^k$, a basis of $\Ii^k$, for $1 \le k \le 21$. We only give $B^k \cap \Ji_\senv$ for $1 \le k \le 6$: 
\begin{align*} 
	B^1 \cap\Ji_\senv &= \{ \vek{0,1,1,1}, \vek{0,2,0,1}, \vek{0,2,1,0}\} \times \{ q_0 \} \\
	B^2 \cap \Ji_\senv &= \{ \vek{0,0,1,1},  \vek{0,2,0,1}, \vek{0,2,1,0}, \vek{0,3,0,0}\}  \times \{ q_0 \} \\
	B^3 \cap \Ji_\senv &= \{ \vek{0,0,1,1},  \vek{0,1,0,1}, \vek{0,1,1,0}, \vek{0,3,0,0}\}  \times \{ q_0 \} \\
	B^4 \cap\Ji_\senv &= B^5 \cap \Ji_\senv = B^3 \cap \Ji_\senv \\
	B^6 \cap \Ji_\senv &=\{ \vek{0,0,1,1},  \vek{0,1,0,1}, \vek{0,1,1,0}, \vek{0,3,0,0}, \vek{0,0,2,0},\vek{0,0,0,2} \}  \times \{ q_0 \}
\end{align*}
We obtain $B^{\marking_0}_\senv \not \subseteq \clo{B^k \cap \Ji_\senv}$ for $1 \le k \le 5$, 
but $B^{\marking_0}_\senv \subseteq \clo{B^6 \cap \Ji_\senv}$. 
Thus, $k_\minm =6$. 
\end{exmp}

\subsection{An Example beyond Petri Nets}
We give an example for a joint GTS which cannot be modeled by a (finite) Petri net and verify its resilience.
\begin{exmp}[path game] \label{pathgame}
	Consider two fixed locations represented by nodes labeled with $L$. 
	Points between them are represented by black nodes. 
	The system tries to construct two directed paths of length~$ 2 $ between the locations, one path forth and one back, using the rules $ \Sys $ in Fig.~\ref{fig:pathgame}.
	The respective ideal is therefore given by $\exists (\BEIdi) \lor \exists ( \BEIdii)$. 
	The environment deletes edges in the graph, corresponding to $ \Env $ in Fig.~\ref{fig:pathgame}. 
	The control automaton is alternating: 
	\begin{equation*}
	\begin{tikzpicture}[baseline=(S.center),scale=0.75,transform shape]
	\renewcommand{\xdis}{2cm}
	\node[circle,draw=black] (S) {$e $};
	\node[circle,draw=black,right=of S] (E) {$ s $};
	\node[above=of S,xshift=-2mm,yshift=-3mm] (A) {$ A $};
	\path[->]
	(S) edge[bend left = 20] node[above] {$ \Sys $} (E)
	(E) edge[bend left = 20] node[below] {$ \Env  $} (S)
	([xshift=-9mm,yshift=-0mm]S.center) edge (S)
	;
	\end{tikzpicture}
	\end{equation*}	
	Thus, one may consider this as a game with alternating turn order. 
	The system can 
	(i) create a new middle point connected to the locations by the rule \texttt{New}, 
	(ii) create two parallel edges provided that there is one by the \texttt{Para}-rules, 
	(iii) reverse the direction of an edge by the \texttt{Rev}-rules, and 
	(iv) merge two middle points each of which are connected to a different location by the \texttt{Mer}-rules.
	We ask whether the system can construct the two directed paths of length~$ 2 $ in a bounded number of rounds (steps) when the environment made its turn, regardless of the current situation. 
	If so, what is the minimal number of~steps?
	
	We can reach the graph $G_{LL}:=\LaLG$ (modulo isolated nodes) when the system is only changing the direction of edges. Hence, $\tuple{ G_{LL}, e} \in \clo{\post^*(G)} \cap \Ji_\senv$ for any start graph $G$ with exactly two locations, arbitrarily many middle points, and arbitrary edges between middle points and locations. Therefore, we only check when $\tuple{ G_{LL}, e}$ occurs the first time in a basis $B^k$. We applied a prototype implementation of the algorithm in Sec.~\ref{sec:decidability} to this example and obtained $B^{13} \cap \Ji_\senv = \{ \tuple{G_{LL},e}\}\not\subseteq B^{12}$ by computation of 
		\begin{align*} 
		B^{12} =  \hspace*{1.2mm}\big(\big\{& \BEi, \hspace*{3mm}\BEii,\hspace*{3mm} \BEiii,\hspace*{3mm} \BEiv,\hspace*{3mm} \BEv ,\hspace*{3mm}\BEvi, \\ 
				&\BEvii, \hspace*{3mm} \BEix ,\hspace*{3mm} \BEviii , \hspace*{3mm}\BEx, \hspace*{3mm}\BExi, \hspace*{3mm}\BExii \big\} \times \{ s\}\big)  \\  
				\cup \hspace*{1.2mm}\big(\big\{ &\BExiii, \hspace*{3mm} \BExiv  \big\} \times \{e \}\big). 
		\end{align*} 
		Thus, $k_\minm = 13$.
\end{exmp}
\begin{figure}[!tb]
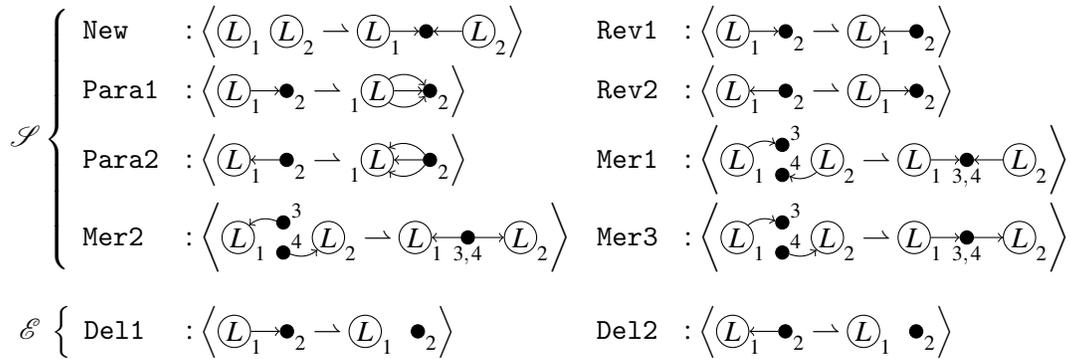

	\begin{align*}
				\Sys &\left\{ 
				\begin{array}{llll} 
				\texttt{New} &: \left\tuple{ 
					\LaL
					\rightharpoonup 
					\LtoMidotL
					\right} & \texttt{Rev1}&: \left\tuple{ \LtoMid \rightharpoonup \LotMid \right} \vspace*{1mm}\\ 
				\texttt{Para1} &: \left\tuple{ 
					\LtoMid
					\rightharpoonup 
					\LtoMidPara
					\right}  & \texttt{Rev2}&: \left\tuple{ \LotMid \rightharpoonup \LtoMid \right} \vspace*{1mm}\\ 
				\texttt{Para2} &: \left\tuple{ 
					\LotMid
					\rightharpoonup 
					\LotMidPara
					\right} &  \texttt{Mer1} &: \left\tuple{ 
					\MergeLiii
					\rightharpoonup 
					\MergeRi
					\right}  \vspace*{1mm} \\
			\texttt{Mer2} &: \left\tuple{\MergeLi \rightharpoonup \MergeRii \right} &
	\texttt{Mer3} &: \left\tuple{\MergeLii \rightharpoonup \MergeRiii \right}
				\end{array} 
				\right. \\[3mm]
				\Env &\hspace*{1mm}\left\{ 
				\begin{array}{llll}
				\texttt{Del1} \hspace*{0.19cm}&: \left\tuple{  
					\LtoMid
					\rightharpoonup 
					\LaMid
					\right} &\hspace*{1.53cm} \texttt{Del2}&: \left\tuple{ 	\LotMid	\rightharpoonup \LaMid \right}
				\end{array} 
				\right.
	\end{align*}
			\caption{Components of the joint GTS}\label{fig:pathgame}
		\end{figure} 
Note that we consider equivalence classes of graphs modulo isolated middle points. This has no effect on the well-structuredness of this example. Also note that leaving out the rules for merging has only a slight impact on the bases and no effect on $k_\minm$.
\subsection{Adverse Conditions vs. Error States}
We compare the adverse conditions approach with the error state approach. 
As pointed out, these two views of resilience are not equivalent. 
While every system that is resilient w.r.t.\ error states (i.e.,\ $ \Ji=\St\setminus\Ii $)
is also resilient w.r.t.\ adverse conditions (i.e., $ \Ji=\Ji_\senv $)
due to $ \Ji_\senv\setminus\Ii \subseteq \St\setminus\Ii $ 
(meaning that if we can reach $ \Ii $ from \emph{every} state, then also from every state in $ \Ji_\senv $),
the opposite does not hold in general.

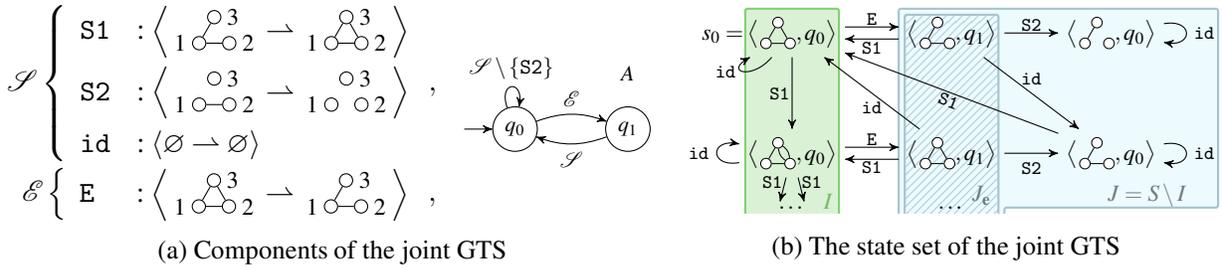
\begin{figure}[!htb]
		\centering
		\begin{subfigure}[h]{0.54\textwidth}
			\centering
			\begin{minipage}{0.65\textwidth}
				$\Sys \left\{ 
				\begin{array}{ll} 
				\texttt{S1} &: \left\tuple{ 
					\trianglegraph{1}{0}{1}
					\rightharpoonup 
					\trianglegraph{1}{1}{1}
					\right} \vspace*{1mm}\\ 
				\texttt{S2} &: \left\tuple{ 
					\trianglegraph{1}{0}{0}
					\rightharpoonup 
					\trianglegraph{0}{0}{0}
					\right}  \vspace*{1mm}\\ 
				\texttt{id} &: \left\tuple{ 
					\emptyset
					\rightharpoonup 
					\emptyset
					\right} 
				\end{array} 
				\right.,$ \vspace*{0mm}\\
				\hspace*{2mm}$\Env \left\{ 
				\begin{array}{ll}
				\texttt{E}\phantom{\mathtt{A}} &: \left\tuple{  
					\trianglegraph{1}{1}{1}
					\rightharpoonup 
					\trianglegraph{1}{0}{1}
					\right}
				\end{array} 
				\right.,$ 
			\end{minipage}\hfill
			\begin{minipage}{0.31\textwidth}
				\centering
			\begin{tikzpicture}[baseline=(S.center),scale=0.75,transform shape]
			\renewcommand{\xdis}{2cm}
			\node[circle,draw=black] (S) {$ q_0 $};
			\node[circle,draw=black,right=of S] (E) {$ q_1 $};
			\node[above=of E] (A) {$ A $};
			\path[->]
			(S) edge[loop, out=110, in=80,min distance=5mm] node[above,xshift=0mm] {$ \Sys\setminus\{\texttt{S2}\} $} (S)
			(S) edge[bend left = 20] node[above] {$ \Env $} (E)
			(E) edge[bend left = 20] node[below] {$ \Sys  $} (S)
			([xshift=-9mm,yshift=-0mm]S.center) edge (S)
			;
			\end{tikzpicture}
			\end{minipage}\hfill
			\caption{Components of the joint GTS}\label{fig:adverseJGTS}
		\end{subfigure} 
	\hfill
		\begin{subfigure}[h]{0.44\textwidth}
		\centering
		\begin{tikzpicture}[scale=0.8, transform shape]
		\renewcommand{\xdis}{2.65cm}
		\renewcommand{\ydis}{2cm}
			\node[] (init) {$ \big\langle\trianglegraphP{1}{1}{1}{0}, q_0 \big\rangle$};
			\node[right=of init] (E) {$ \big\langle\trianglegraphP{1}{0}{1}{0},q_1\big\rangle $};
			\node[right=of E] (ES2) {$ \big\langle\trianglegraphP{0}{0}{1}{0},q_0\big\rangle $};
			\node[below=of init] (S1) {$ \big\langle\trianglegraphP{1}{0}{1}{1},q_0\big\rangle $};
			\node[below=of E] (S1E) {$ \big\langle\trianglegraphP{1}{1}{1}{0},q_1\big\rangle $};
			\node[below=of ES2] (S1ES2) {$ \big\langle\trianglegraphP{1}{0}{1}{0},q_0\big\rangle $};
			\node at (init) [xshift=-11.5mm,yshift=-0.5mm] (initlabel) {$ s_0= $};
			\node at (S1) [yshift=-9mm] (dots) {$ \dots $};
			\node at (S1E) [yshift=-9mm] (dots2) {$ \dots $};
			\path[->]
			(init) edge[loop, out=-130,in=-150, min distance=5mm] node[left] {{\footnotesize $ \mathtt{id} $}} (init)
			(init) edge node[left, inner sep =1pt] {{\footnotesize $ \mathtt{S1} $}} (S1)
			([yshift=1mm]init.east) edge node[above, inner sep =1pt] {{\footnotesize $ \mathtt{E} $}} ([yshift=1mm]E.west)
			(S1ES2) edge node [below, inner sep =1pt,sloped] {\footnotesize $ \mathtt{S1} $} (init)
			([yshift=-1mm]E.west) edge node[below, inner sep =1pt] {{\footnotesize $ \mathtt{S1} $}} ([yshift=-1mm]init.east)
			(S1) edge[loop, out=-170,in=-190, min distance=5mm] node[left] {{\footnotesize $ \mathtt{id} $}} (S1)
			([yshift=1mm]S1.east) edge node[above, inner sep =1pt] {{\footnotesize $ \mathtt{E} $}} ([yshift=1mm]S1E.west)
			([yshift=-1mm]S1E.west) edge node[below, inner sep =1pt] {{\footnotesize $ \mathtt{S1} $}} ([yshift=-1mm]S1.east)
			(E) edge node[above, inner sep = 1pt] {\footnotesize $ \mathtt{S2} $} (ES2)
			(E) edge node[above, inner sep = 4pt] {\footnotesize $ \mathtt{id} $} (S1ES2)
			(S1E) edge node[below, inner sep=4pt] {\footnotesize $ \mathtt{id} $} (init)
			(S1E) edge node[below, inner sep=4pt] {\footnotesize $ \mathtt{S2} $} (S1ES2)
			(ES2) edge[loop, out=10,in=-10, min distance=5mm] node[right] {{\footnotesize $ \mathtt{id} $}} (ES2)
			(S1ES2) edge[loop, out=10,in=-10, min distance=5mm] node[right] {{\footnotesize $ \mathtt{id} $}} (S1ES2)
			
			([xshift=-1mm]S1.south) edge node[left,inner sep=1pt, pos=0.2] {{\footnotesize $ \mathtt{S1} $}} ([yshift=-4mm,xshift=-2mm]S1.south)
			([xshift=1mm]S1.south) edge node[right,inner sep=1pt, pos=0.2] {{\footnotesize $ \mathtt{S1} $}} ([yshift=-4mm,xshift=2mm]S1.south)
			;			
			
			\begin{pgfonlayer}{bg}
			\draw[thick,draw=mantis!100,fill=mantis!30,rounded corners=0.3mm]
			([xshift=1mm,yshift=-6mm]S1.south west) 
				-- ([xshift=1mm,yshift=0mm]S1.west|-init.north) 
				-- ([xshift=-1mm,yshift=0mm]init.north east)
				-- ([xshift=-1mm,yshift=-6mm]S1.south east) 
			;
			\node at (S1) [xshift=6mm, yshift=-8mm, mantis!90!black] (I) {$ \Ii $}; 
			
			\draw[thick,draw=lightcyan!80!black,fill=lightcyan!40,rounded corners=0.3mm]
			([xshift=0mm,yshift=-6mm]S1E.south west) 
			-- ([xshift=0mm,yshift=0mm]E.north west) 
			-- ([xshift=9mm,yshift=0mm]ES2.east|-E.north)
			-- ([xshift=9mm,yshift=-5mm]ES2.east|-S1E.south)
			-- ([xshift=0mm,yshift=-5mm]S1E.south east) 
			-- ([xshift=0mm,yshift=-6mm]S1E.south east)  
			;
			\node at (S1ES2) [xshift=6mm, yshift=-7mm, lightcyan!50!black] (J) {$ \Ji=\St\setminus \Ii $};
			
			\draw[thick,draw=lightcyan!80!black,pattern= north east lines, pattern color=lightcyan!85!black,rounded corners=0.3mm]
			([xshift=1mm,yshift=-6mm]S1E.south west) 
			-- ([xshift=1mm,yshift=-1mm]E.north west) 
			-- ([xshift=-1mm,yshift=-1mm]E.north east)
			-- ([xshift=-1mm,yshift=-6mm]S1E.south east) 
			;
			\node at (S1E) [xshift=5mm, yshift=-7mm, lightcyan!50!black] (JE) {$ \Ji_\senv $}; 
			\end{pgfonlayer}
		\end{tikzpicture}
		\caption{The state set of the joint GTS}\label{fig:adverseStatespace}
	\end{subfigure} 
		\caption{A joint GTS example that is $ 1 $-step resilient w.r.t.\ $ \Ji_\senv $ (adverse conditions), 
			but not resilient w.r.t.\ $ \Ji=\St\setminus\Ii $ (error states).}
		\label{fig:AdverseVsError}
\end{figure}
We do not define a restriction on the system/environment to allow more freedom of modeling
but our counterexample in Fig.~\ref{fig:AdverseVsError} captures the adverse effect of the environment.
The joint GTS in Fig.~\ref{fig:adverseJGTS},
together with a start graph $ \trianglegraphS{1}{1}{1}{0.7} $,
results in the state set in Fig.~\ref{fig:adverseStatespace}.
A basis of $ \Ii $ is given by~$ \langle\trianglegraphS{1}{1}{1}{0.7},q_0\rangle $.
From the definition of $ A $, we see that $ \Ji_\senv $ is given by 
$ \{ \tuple{G, q_1}\,|\, \trianglegraphS{0}{0}{0}{0.7}\leq G \} $,
indicated by the hatched area. 
We see that from every reachable state in $ \Ji_\senv $
we can reach $ \Ii $ in one step, which means that 
the system is $ 1 $-step resilient w.r.t.\ adverse conditions. 
On the other hand, we cannot reach $ \Ii $ from the state
$ \tuple{\trianglegraphS{0}{0}{1}{0.7},q_0}\in\St\setminus\Ii $, 
which is reachable from $ \Ji_\senv $ when the ``wrong'' system rule is applied.
This means the system is \emph{not} resilient w.r.t.\ error states.

If, due to the structure of a joint GTS, we can reach $\Ji_\senv$ from every reachable error state,
as, e.g., in Ex.~\ref{prodch}, 
both approaches coincide.
The computed $k_\minm$'s then only differ by at most the index $k(\Ji_\senv)$.

\section{Related Work}
\label{sec:related}
We use SPO graph transformation for modeling systems as in  \textbf{L{\"o}we} \cite{Loewe91} (see also \textbf{Ehrig et al.\ }\cite{Ehrig97}).  

Our notion of joint GTSs is a special case of graph-transformational interacting systems. Another approach considering dependencies can be found, e.g.,\ in \textbf{Corradini et al.} \cite{CorradiniFR08}. 

The concept of resilience is broadly used in different areas, e.g.,\ in industrial control systems \cite{Trivedi09,Rieger13}, with varying definitions. Following these ideas, we formulated resilience in the abstract settings of TSs and GTSs. 
Our interpretation of resilience captures recovery in bounded time. 

\textbf{Abdulla et al.\ }\cite{AbdullaCJT96} show the decidability of 
ideal reachability (coverability), eventuality properties and simulation in (labeled) SWSTSs.
We use the presented algorithm to show the decidability of resilience problems in SWSTSs. 

\textbf{Finkel \& Schnoebelen} \cite{FinkelS01} show that the concept of well-structuredness is ubiquitous in computer science by providing a large class of example models (e.g.,\ Petri nets and their extensions, communicating finite state machines, lossy systems, basic process algebras). 
Moreover, they give several decidability results for systems with different degrees of well-structuredness. 
They also generalize the algorithm of~\cite{AbdullaCJT96} to (not necessarily strongly) WSTSs to show decidability of coverability. 

In \cite{Koenig17}, \textbf{K{\"o}nig \& St{\"u}ckrath} extensively study the well-structuredness of GTSs. 
More detailed considerations can be found in 
\cite{Stueckrath16}.
They identify three types of wqos (minor, subgraph, induced subgraph) on graphs based on results of \textbf{Ding}~\cite{Ding92} and \textbf{Robertson \& Seymour}~\cite{Robertson04}. The fact that the subgraph order is a wqo on graphs of bounded path length while the minor order allows all graphs comes with a trade-off:
For obtaining well-structuredness w.r.t.\ the minor order, the GTS must contain all edge contraction rules, i.e.,\ it must be a ``lossy'' GTS. 
On the other hand, all GTSs (without application conditions) are strongly well-structured on graphs of bounded path length w.r.t.\ the subgraph order. This result enables us to apply our abstract results to GTSs (in particular, we use the pred-basis procedure in the case of the subgraph order for our algorithm). 
In our setting, the regarded wqo is the subgraph order since it yields strong compatibility. They also generalize the notion of well-structured transition systems by
regarding $Q$-restricted WSTSs whose state sets needs not to be a wqo but rather a subset $Q$ of the states is a wqo. 
K{\"o}nig \& St{\"u}ckrath develope a backwards algorithm based on \cite{FinkelS01} for $Q$-restricted WSTSs obtaining decidability of coverability under additional assumptions. For SWSTSs, this approach coincides with the ideal reachability algorithm \cite{AbdullaCJT96}.

All in all, our result for SWSTSs uses a modification of Abdulla et.\ al \cite{AbdullaCJT96}, and our application to GTSs additionally uses the predecessor-basis procedure from K{\"o}nig \& St{\"u}ckrath \cite{Koenig17} in every computation step. It can also be seen as a modification of the backwards analysis of K{\"o}nig \& St{\"u}ckrath \cite{Koenig17} in the case of the subgraph order. We summarize the relations of our results and the used concepts in Fig. ~\ref{fig:relConcepts}. 
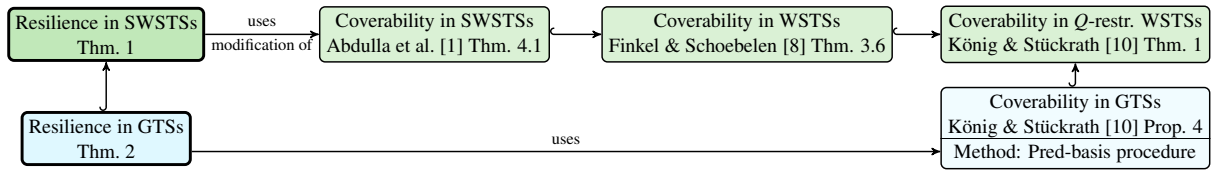
\begin{figure}[!htb]
	\centering
	\resizebox{\textwidth}{!}{%
		\begin{tikzpicture}
		\renewcommand{\ydis}{2cm}
		\renewcommand{\xdis}{6.3cm}
		\node[rectangle, rounded corners=3pt, draw=black,align=center,ultra thick,fill=mantis!50] (resilienceSWSTS) {Resilience in SWSTSs\\ Thm.~\ref{mainthm}};
		\node[rectangle, rounded corners=3pt, draw=black,align=center,right= of resilienceSWSTS,thick,fill=mantis!30] (coverabilitySWSTS) {Coverability in SWSTSs\\ Abdulla et al.\ \cite{AbdullaCJT96} Thm. 4.1};
		\node[rectangle, rounded corners=3pt, draw=black,align=center,right= of coverabilitySWSTS,thick,fill=mantis!30,xshift=-3mm] (coverabilityWSTS) {Coverability in WSTSs\\ Finkel \& Schoebelen \cite{FinkelS01} Thm. 3.6};
		\node[rectangle, rounded corners=3pt, draw=black,align=center,right= of coverabilityWSTS,thick,fill=mantis!30] (coverabilityQWSTS) {Coverability in $Q$-restr. WSTSs \\ K{\"o}nig \& St{\" u}ckrath  \cite{Koenig17} Thm. 1};
		\node[rectangle, rounded corners=3pt, draw=black,align=center,below= of resilienceSWSTS,ultra thick,fill=lightcyan!70] (resilienceGTS) {Resilience in GTSs\\ Thm.~\ref{mainGTS}};
		\node[rectangle, rounded corners=3pt, draw=black,align=center,below= of coverabilityQWSTS,thick,fill=lightcyan!30,yshift=2mm] (coverabilityGTS) {Coverability in GTSs \\ K{\"o}nig \& St{\" u}ckrath \cite{Koenig17} Prop. 4 \\ Method: Pred-basis procedure};
		
		\draw[->,thick]
		(resilienceSWSTS) edge node[align=center] {{\footnotesize uses}\\{\footnotesize modification of}} (coverabilitySWSTS)
		;
		\draw[right hook->,thick]
		(resilienceGTS) edge  (resilienceSWSTS) 
		(coverabilitySWSTS) edge (coverabilityWSTS)
		(coverabilityWSTS) edge (coverabilityQWSTS)
		(coverabilityGTS) edge (coverabilityQWSTS)
		;
		\draw[-]
		([yshift=-2mm]coverabilityGTS.west) edge ([yshift=-2mm]coverabilityGTS.east)
		;
		\draw[->, rounded corners=3pt,thick]
		([yshift=-2.5mm]resilienceGTS.east) -- node[above] {\footnotesize uses} ([yshift=-2.5mm]coverabilityGTS.west|-resilienceGTS.east)
		;
		\end{tikzpicture}}
	\caption{Relation of the decidability results for resilience (bold) and the results in related work.
	The bottom (blue) and the top (green) layers contain decidability results for GTSs and WSTSs, respectively. The hooked arrows ($\injto$) mean ``generalized to'' or ``instance of''.}
	\label{fig:relConcepts}
\end{figure}
\section{Conclusion}
\label{sec:conc}
We provided a definition of resilience in an abstract framework, namely the explicit and the bounded resilience problem,
and proved decidability of both problems for strongly well-structured transition systems.
By application of this theory, we obtained decidability results for GTSs of bounded path length, 
and in particular a verification framework for GTSs which incorporates adverse conditions.

Our results require that a basis of the upward-closure of all successors is given. 
Although determining this basis for GTSs is a difficult task, it is computable for Petri nets and can be computed for other GTSs in special cases. 
We showed how to approximate such a basis when the assumption is dropped,
thereby approximating the answer to the resilience problems.
In this paper, the used well-quasi-order on graphs is the subgraph order. 
For the proof, the requirement of \emph{strong} compatibility is crucial. 
Our approach does not work for lossy GTSs which are well-structured w.r.t.\ the minor order. 
We conjecture that both resilience problems are undecidable for lossy GTSs. 
Ideals w.r.t.\ the subgraph order can be represented by positive basic graph constraints. 
In general, nested graph constraints do not constitute ideals.

\emph{Future work}. We will investigate on
(1)~the (un)decidability of resilience for WSTSs/lossy GTSs,
(2)~synthesis of resilient GTSs, i.e.,\ using the presented approach to construct provably resilient GTSs, and
(3)~the computability of a basis of the upward-closure of all successors for (a subclass of) strongly well-structured GTSs.
Regarding (2), we will investigate on the construction of strongly well-structured GTSs. 
Regarding (3), we will consider further methods for 
achieving approximation results for resilience.\\[2mm]
\noindent\textbf{Acknowledgment.} 
We are grateful to Annegret Habel, Christian Sandmann, and the anonymous reviewers for their helpful comments to this paper. 
We thank Barbara K\"{o}nig for the discussion about approximation and computation of the upward-closure of all successors, 
and Detlef Plump for the note on graph classes of bounded path length and bounded node degree.

\bibliographystyle{eptcs}
\bibliography{./sources/bibWithDOIs} 
\appendix
\end{document}